\newcommand\fs@norules{\def\@fs@cfont{\bfseries}\let\@fs@capt\floatc@ruled
  \def\@fs@pre{}%
  \def\@fs@post{}%
  \def\@fs@mid{\kern3pt}%
  \let\@fs@iftopcapt\iftrue}
\DeclareMathAlphabet{\mathpzc}{OT1}{pzc}{m}{it}
\newcounter{cc}
\definecolor{awesome}{rgb}{1.0, 0.13, 0.32}
\newcommand{\rmnum}[1]{\romannumeral #1}
\newcommand{\Rmnum}[1]{\expandafter\@slowromancap\romannumeral #1@}
\DeclarePairedDelimiter\ceil{\lceil}{\rceil}
\DeclarePairedDelimiter\floor{\lfloor}{\rfloor}
\newtheorem{theorem}{Theorem}
\newtheorem{lemma}{Lemma}
\newtheorem{corollary}{Corollary}
\begin{document}

\title{Downlink Non-Orthogonal Multiple Access with Limited Feedback}
\author{\IEEEauthorblockN{Xiaoyi (Leo) Liu, \textit{Student Member, IEEE}, Hamid Jafarkhani, \textit{Fellow, IEEE}} \\\
\thanks{X. Liu, H. Jafarkhani are with Center for Pervasive Communications \& Computing, University of California, Irvine, Irvine, CA 92697--2625 USA (Email:
\{xiaoyil3, hamidj\}@uci.edu).} }

\maketitle

\thispagestyle{plain}
\pagestyle{plain}

\begin{abstract}
In this paper, we analyze downlink non-orthogonal multiple access (NOMA) networks
with limited feedback. Our goal is to derive appropriate transmission rates for rate
adaptation and minimize outage probability of minimum rate for the constant-rate data
service, based on distributed channel feedback information from receivers. We propose an
efficient quantizer with variable-length encoding that approaches the best performance of
the case where perfect channel state information is available everywhere. We prove that in
the typical application with two receivers, the losses in the minimum rate and outage
probability decay at least exponentially with the minimum feedback rate. We analyze the
diversity gain and provide a sufficient condition for the quantizer to achieve the maximum
diversity order. For NOMA with $K$ receivers where $K > 2$, we solve the minimum rate maximization problem within an accuracy of $\epsilon$ in time complexity
of $O\left(K\log\frac{1}{\epsilon}\right)$, then, we apply the previously proposed
quantizers for $K = 2$ to the case of $K > 2$. Numerical simulations are presented to demonstrate
the efficiency of our proposed quantizers and the accuracy of the analytical results.

\end{abstract}

\begin{IEEEkeywords}
NOMA, rate adaptation, outage probability, minimum rate, limited feedback
\end{IEEEkeywords}

\IEEEpeerreviewmaketitle

\section{Introduction}

Non-orthogonal multiple access (NOMA) has received significant attention recently for its
superior spectral efficiency \cite{SystemLevelNOMA}. It is a promising candidate for
mobile communication networks, and has been included in LTE Release 13 for the scenario
of two-user downlink transmission under the name of multi-user superposition transmission
\cite{3gppNOMA}. The key idea of NOMA is to multiplex multiple users with
superposition coding at different power levels, and utilize successive interference
cancellation (SIC) at receivers with better channel conditions. Specifically, for NOMA with
two receivers, the messages to be sent are superposed with different power allocation
coefficients at the BS side. At the receivers' side, the weaker receiver decodes its intended message by treating the other's as noise, while
the stronger receiver first decodes the message of the
weaker receiver, and then decodes its own by removing the other message from the
received signal. In this way, the weaker receiver benefits from larger power, and the
stronger receiver is able to decode its own message with no interference. Hence, the overall
performance of NOMA is enhanced, compared with traditional orthogonal multiple access
schemes. It is shown in \cite{NOMA_Info} that the rate region of NOMA is the same as
the capacity region of Gaussian broadcast channels with two receivers, but with an
additional constraint that the stronger receiver is assigned less power than the weaker one.

There has been a lot of work on NOMA. In \cite{SystemLevelNOMA} and
\cite{NOMA_Info}, the authors evaluated the benefits of downlink NOMA from the
system and information theoretic perspectives, respectively. NOMA with multiple antennas
was studied in \cite{MIMONOMA}. A lot of effort has been put into the power allocation
design in NOMA. For example, the authors in \cite{NOMA_Fair} and \cite{PA_NOMA}
analyzed the necessary conditions for NOMA with two users to beat the performance of
time-division-multiple-access (TDMA), and derived closed-form expressions for the
expected data rates and outage probabilities. In \cite{JChoi_NOMA_PA}, power
allocation based on proportional fairness scheduling was investigated for downlink
NOMA. Transmit power minimization subject to rate constraints was discussed in
\cite{TransmitPowerMinimizationNOMA}.

However, all the mentioned work on NOMA has assumed a perfect knowledge of the
distributed channel state information (CSI) at the BS and all the geographically-distributed
receivers, which is difficult to realize in practice. Therefore, we consider the limited
feedback scenario wherein each receiver only has access to its own local CSI from the BS
to itself, and then broadcasts its feedback information to the BS and other receivers
\cite{Erdem_IT_Distributed_Beamforming_Relay_Interference,Cooperative_Quantization}.
Under such settings, interesting problems arise, for example: How to design simple but
efficient quantizers for NOMA? What are the performance losses compared with the
full-CSI case? A user-selection scheme based on limited feedback was studied in \cite{noma_feedback}.
 In \cite{MassiveMIMOFeedback}, the authors proposed a one-bit feedback
scheme for ordering users in downlink Massive-MIMO-NOMA systems, and derived the
achieved outage probability. In \cite{OneBitFeedback}, the authors derived the outage
probability of NOMA based on one-bit feedback of channel quality from each receiver,
and performed power allocation to minimize the outage probability. Additionally, the
problems of transmit power minimization and user fairness maximization based on
statistical CSI subject to outage constraints were studied in
\cite{NOMAOutageConstraint}. In \cite{ImperfectCSINOMA}, the authors derived the
outage probability and sum rate with fixed power allocation by assuming imperfect and
statistical CSI.

In this paper, we focus on the limited feedback design for the typical scenario of downlink
NOMA, where a BS communicates with two receivers simultaneously \cite{3gppNOMA}.
Based on distributed feedback and in the interest of user fairness, we wish to have the
minimum rate of the receivers be as large as possible. To dynamically adjust the
transmission rates for better channel utilization, we propose a uniform quantizer which
assigns each value to its left boundary point and employs variable-length encoding (VLE).
Then, power allocation is calculated based on the channel feedback. We calculate the
transmission rates that can be supported by the current channel states, and analyze the
rate loss compared with the full-CSI scenario. The derived upper bound on rate loss shows that
it decreases at least exponentially with the minimum of the feedback rates. For the
constant-rate service where outage probability is the main concern, we conversely propose
a uniform quantizer which quantizes each value to its right boundary point. Through the
developed upper bound, we show the outage probability loss also decays at least
exponentially with the minimum of feedback rate. Additionally, we analyze the achieved
diversity gain and provide a sufficient condition on the proposed quantizer in order to
achieve the full-CSI diversity order. For the general scenario with $K$ receivers, we
solve the minimum rate maximization problem within an accuracy
of $\epsilon$ in time complexity of $O\left(K\log\frac{1}{\epsilon}\right)$, and apply the
previously proposed quantizers for the two-user case here by treating the quantized channels as the perfect ones. We perform
Monte Carlo numerical simulations to verify the superiority of our proposed quantizers and
the accuracy of the theoretical analysis.

The primary goal of this paper is to study the impacts of quantization on the performance of
NOMA, and provide meaningful insights for practical limited feedback design. To
summarize, the main contributions of this paper are three-fold:
\begin{enumerate}
  \item[(1)] We propose efficient quantizers to maximize the minimum rate in NOMA.
      The ideas of our proposed quantizers and VLE as well as the designs for rate
      adaptation and outage probability based on distributed feedback can be generalized
      to many other scenarios, e.g., NOMA with other performance measures, the more
      general interference channels, and so on.
  \item[(2)] Our theoretical analysis serves as a general framework to analyze the
      performances of such quantizers in NOMA and other scenarios. For instance, it can
      be easily applied to study the performances of other power allocation schemes in
      NOMA based on limited feedback, i.e., \cite{NOMA_Fair, PA_NOMA}.
  \item[(3)] We solve the minimum rate maximization problem for any number of receivers with linear time complexity.
\end{enumerate}

The remainder of this paper is organized as follows: In Section \Rmnum{2}, we provide a
brief description of the system model and formulate the problem of limited feedback. In
Sections \Rmnum{3} and \Rmnum{4}, we propose efficient quantizers for rate adaptation
and outage probability, and analyze the performance loss. We extend our proposed
quantizers to the general case with any number of receivers in Section \Rmnum{5}.
Numerical simulations are provided in Section \Rmnum{6}. We draw the main conclusions
and summarize future work in Section \Rmnum{7}. Technical proofs are presented in the
appendices.

\textit{\textbf{Notations:}} The sets of real and natural numbers are represented by
$\mathpzc{R}$ and $\mathpzc{N}$, respectively. For any $x \in \mathpzc{R}$,
$\floor{x}$ is the largest integer that is less than or equal to x, and $\ceil{x}$ is the smallest
integer that is larger than or equal to $x$. $\text{Pr}\{\cdot\}$ and $\mathtt{E}[\cdot]$
represent the probability and expectation, respectively. For a random variable (r.v.) $X$,
$f_X(\cdot)$ is its probability density function (p.d.f.).  $\mathbbmss{CN}(\mu,
\lambda)$ represents a circularly symmetric complex Gaussian r.v. with mean $\mu$ and
variance $\lambda$. For a logical statement $\mathtt{ST}$, we let ${\pmb 1}_{\rm
\mathtt{ST}} = 1$ when $\mathtt{ST}$ is true, and ${\pmb 1}_{\rm \mathtt{ST}} = 0$
otherwise. Finally, the expression $X\sim_Y Z$ means $0 < \lim_{Y\rightarrow
\infty}\frac{X}{Z} < \infty$.

\section{Problem Formulation}

\subsection{System Model}
\begin{figure}
\centering
  \includegraphics[width=3 in]{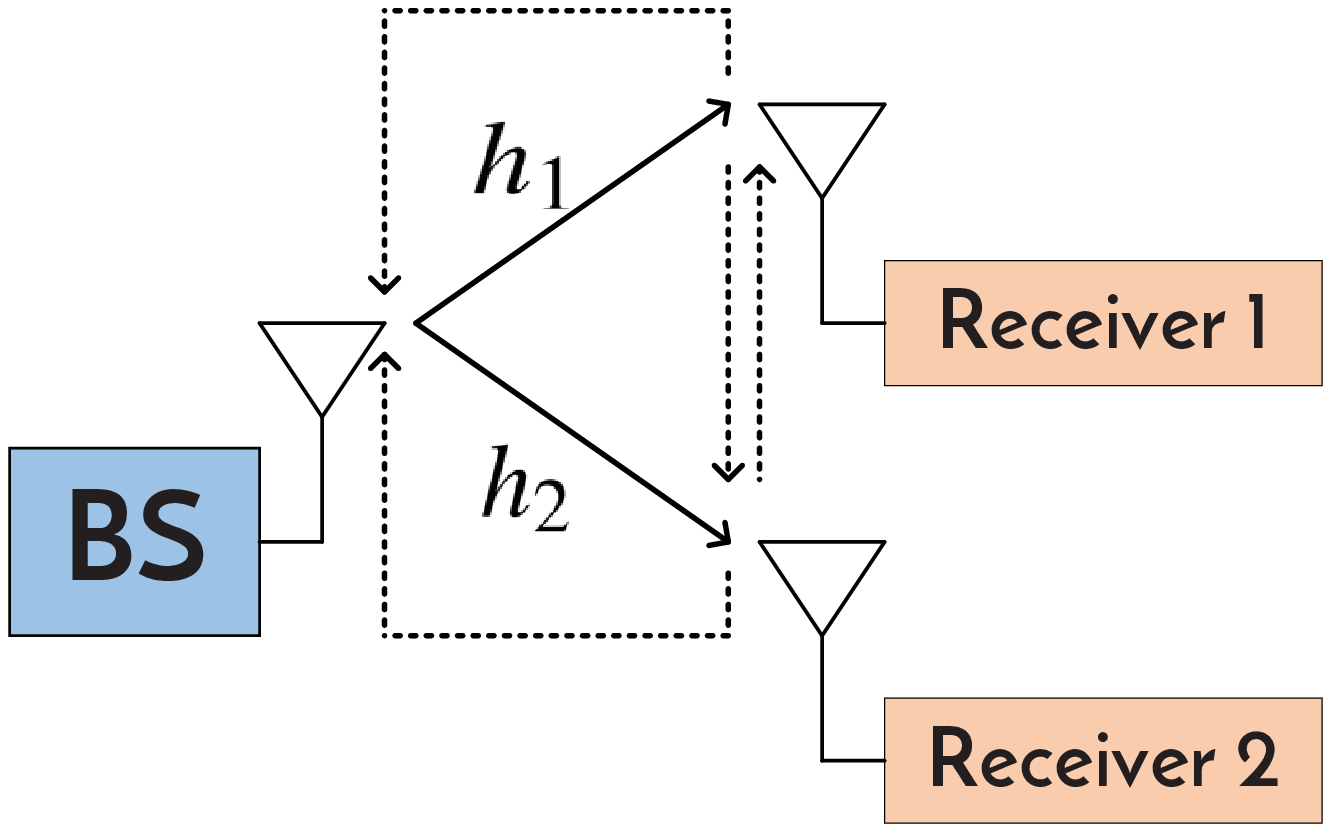}
  \caption{Downlink NOMA networks. The solid and dashed lines represent the signal and feedback links, respectively.}
\end{figure}

Consider the downlink transmission in Fig. 1, where a BS is to transmit a superposition of
two symbols to two receivers over the same resource block. Both BS and receivers are
equipped with only a single antenna. According to the multiuser superposition transmission
scheme \cite{3gppNOMA}, the transmitted signal is formed as
\begin{align}
  x = \sqrt{P_1}s_1 + \sqrt{P_2}s_2, \nonumber
\end{align}
where $s_i$ is the information bearing symbol for Receiver $i$ with
$\mathtt{E}\left[s_i\right] = 0$ and $\mathtt{E}\left[\left|s_i\right|^2\right] = 1$ for
each channel state (the expectation is over all transmitted symbols); $P_i$ is the average
transmit power associated with $s_i$. Let $P = P_1 + P_2$ be the total transmit power,
and $\alpha = \frac{P_1}{P}$ be the power allocation coefficient, then, $P_1 = \alpha P$
and $P_2 = (1-\alpha)P$ with $0 \leq \alpha \leq 1$.

Denote by $h_i \sim \mathbbmss{CN}(0, \lambda_i)$ the channel coefficient from the BS to
Receiver $i$. Without loss of generality, assume $\lambda_1 \geq \lambda_2$. The
received signals at Receivers 1 and 2 are respectively given by
\begin{eqnarray}
  y_1 = h_1\sqrt{P_1}s_1 + h_1\sqrt{P_2}s_2 + n_1, \quad y_2 = h_2\sqrt{P_1}s_1 + h_2\sqrt{P_2}s_2 + n_2, \nonumber
  \end{eqnarray}
where $n_i \sim \mathbbmss{CN}(0, 1)$ represents the background noise. Let $H_i =
\left|h_i\right|^2$, then, the p.d.f. of $H_i$ is $f_{H_i}(x) =
\frac{e^{-\frac{\lambda_i}{x}}}{\lambda_i}$ for $x > 0$.\footnote{The results in this
paper can be trivially generalized to other distributions of $H_1$ and $H_2$. } We
assume a quasi-static channel model, in which the channels vary independently from one
block to another, while remaining constant within each block. Either receiver is assumed to
perfectly estimate its local CSI (i.e., $H_i$), and send the associated quantized local CSI to
the other receiver and the BS in a broadcast manner via error-free and delay-free feedback
links \cite{David_Love_Grassmanian,Xiaoyi_Multicast_WCOM}.

With SIC, the stronger receiver with better channel condition (i.e., larger $H_i$) first
decodes the message for the weaker receiver, and then decodes its own after removing the
message of the weaker one from its received signal; the weaker receiver with poorer
channel condition directly decodes its own message by treating the message of the stronger
one as noise \cite{NOMA_Power_Allocation,NOMA_MaxMin_Power_Allocation}.
Specifically, when $H_1 \geq H_2$, the rate for Receiver 2 (i.e., the weaker one) to
decode $s_2$ by treating $s_1$ as noise is
\begin{align}
  r_2(\alpha) = \log_2\left(1 + \frac{PH_2(1-\alpha)}{\alpha H_2P + 1}\right), \nonumber
\end{align}
which is not larger than the rate for Receiver 1 to decode $s_2$, given as $r_{1\rightarrow
2} = \log_2\left(1 + \frac{PH_1(1-\alpha)}{\alpha H_1 P + 1}\right)$. If $s_2$ is
transmitted at the rate of $r_2(\alpha)$, Receiver 1 can decode $s_2$ successfully with an
arbitrarily small probability of error \cite{Elements_IT}. Afterwards, Receiver 1 can remove
$h_1\sqrt{P_2}s_2$ from $y_1$, and achieve a data rate for $s_1$ as
\begin{align}
  r_1(\alpha) & = \log_2\left(1 + \alpha P H_1\right). \nonumber
\end{align}
On the other hand, when $H_1 < H_2$, Receiver 2 first decodes $s_1$, removes
$h_2\sqrt{P_1}s_1$ from $y_2$, and then decodes $s_2$, while Receiver 1 decodes $s_1$
directly by treating $s_2$ as noise.

\subsection{Maximum Minimum Rate}

Our goal is to maximize the minimum of $r_1(\alpha)$ and $r_2(\alpha)$ to ensure
fairness between receivers \cite{Cooperative_Quantization,MaxMinFairness}. When
perfect CSI is available at the BS and receivers, the optimal power allocation coefficient
$\alpha^{\star}$ can be found by solving the optimization problem $r_{\max} = \max\limits_{0 \leq
\alpha \leq 1} \min \{r_1(\alpha), r_2(\alpha)\}$, the solution of which is given in the
following theorem.
\begin{theorem}
  When $H_1 \geq H_2$, the solution of $\max\limits_{0 \leq \alpha \leq 1} \min \{r_1(\alpha), r_2(\alpha)\}$ is given by
  \begin{align}
  \label{opt_alpha}
    \alpha^{\star}
    =
    \frac{2H_2}{\sqrt{\left(H_1 + H_2\right)^2 + 4 H_1 H_2^2 P} + \left(H_1 + H_2\right)}.
  \end{align}
\end{theorem}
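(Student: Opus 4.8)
The plan is to exploit the opposite monotonicities of the two rate functions in $\alpha$. First I would show that $r_1(\alpha) = \log_2\left(1 + \alpha P H_1\right)$ is strictly increasing on $[0,1]$, since its argument grows linearly in $\alpha$. For $r_2$, noting that in $\frac{P H_2 (1-\alpha)}{\alpha H_2 P + 1}$ the numerator decreases while the denominator increases as $\alpha$ grows (a one-line derivative check confirms this), one sees that $r_2(\alpha)$ is strictly decreasing on $[0,1]$. At the endpoints, $r_1(0) = 0 \le r_2(0)$ and $r_2(1) = 0 \le r_1(1)$, so by continuity the two curves cross exactly once at an interior point $\alpha^{\star} \in (0,1)$.

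The second step is to argue that $\min\{r_1, r_2\}$ is maximized precisely at this crossing point. For $\alpha < \alpha^{\star}$ we have $r_1(\alpha) < r_2(\alpha)$, so $\min\{r_1, r_2\} = r_1$, which is increasing; for $\alpha > \alpha^{\star}$ we have $r_1(\alpha) > r_2(\alpha)$, so $\min\{r_1, r_2\} = r_2$, which is decreasing. Hence the minimum rises up to $\alpha^{\star}$ and falls thereafter, so the maximizer is exactly $\alpha^{\star}$, characterized by the balancing condition $r_1(\alpha^{\star}) = r_2(\alpha^{\star})$.

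Third, I would solve $r_1(\alpha) = r_2(\alpha)$ in closed form. Equating the arguments of the logarithms and clearing the denominator reduces the condition $\alpha P H_1 = \frac{P H_2(1-\alpha)}{\alpha H_2 P + 1}$ to the quadratic
$$H_1 H_2 P \, \alpha^2 + \left(H_1 + H_2\right)\alpha - H_2 = 0.$$
Since the product of its roots equals $-H_2/(H_1 H_2 P) < 0$, exactly one root is positive, and the quadratic formula gives
$$\alpha^{\star} = \frac{-\left(H_1 + H_2\right) + \sqrt{\left(H_1+H_2\right)^2 + 4 H_1 H_2^2 P}}{2 H_1 H_2 P}.$$

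Finally, I would rationalize by multiplying numerator and denominator by $\sqrt{\left(H_1+H_2\right)^2 + 4 H_1 H_2^2 P} + \left(H_1 + H_2\right)$; the numerator collapses to $4 H_1 H_2^2 P$, and cancelling the common factor $2 H_1 H_2 P$ yields the stated expression~\eqref{opt_alpha}. I expect the only genuine obstacle to be the max–min argument in the second step: one must check carefully that the single crossing together with the two monotonicities really pins the optimum to the interior balancing point (rather than to a boundary), and that this point indeed lies in $[0,1]$. The algebra in the last two steps, including the rationalization and the selection of the positive root, is routine.
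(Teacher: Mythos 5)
Your proposal is correct and takes essentially the same route as the paper: the paper's proof likewise observes that $r_1(\alpha)$ increases from $0$ while $r_2(\alpha)$ decreases to $0$, concludes that the max--min is attained at the balancing point $r_1(\alpha^{\star}) = r_2(\alpha^{\star})$, and derives \eqref{opt_alpha} from that equation. Your write-up simply makes explicit the details the paper leaves implicit (the single-crossing argument, the quadratic $H_1H_2P\,\alpha^2 + (H_1+H_2)\alpha - H_2 = 0$, the positive-root selection, and the rationalization), all of which check out.
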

\begin{proof}
Notice that with $\alpha$ increasing from $0$ to $1$, $r_1(\alpha)$ increases from 0 to
$\log_2\left(1 + PH_1\right)$ and $r_2(\alpha)$ decreases from $\log_2\left(1 +
PH_2\right)$ to $0$. Since $\log_2\left(1 + PH_1\right) \geq \log_2\left(1 +
PH_2\right)$, the maximum minimum rate is reached when $r_1(\alpha^{\star}) =
r_2(\alpha^{\star})$, from which $\alpha^{\star}$ in \eqref{opt_alpha} is derived.
\end{proof}

The expression of $\alpha^{\star}$ when $H_1 < H_2$ can be obtained straightforwardly. It is found from \eqref{opt_alpha} that: (\rmnum{1}) Both messages attain the same rate at
optimality, i.e., $r_1\left(\alpha^{\star}\right) = r_2\left(\alpha^{\star}\right) = r_{\max}$. Moreover,
it can be verified that the rate pair $(r_1\left(\alpha^{\star}\right),
r_2\left(\alpha^{\star}\right))$ is on the rate region boundaries of both NOMA and
Gaussian broadcast channels with two receivers \cite{NOMA_Info}. (\rmnum{2}) When
$P\rightarrow 0$, $\alpha^{\star}\rightarrow \frac{H_2}{H_1+H_2}$, in which case the
power assigned to the stronger receiver is in proportion to the channel quality of the weaker
one; when $P\rightarrow \infty$, $\alpha^{\star}\rightarrow 0$, then, BS should allocate
almost all the power to the weaker one. (\rmnum{3}) $\alpha^{\star} \geq \frac{1}{2}$.
Generally, NOMA steers more power towards the weaker receiver to balance their
transmissions.

It is also worth pointing out that $\alpha^{\star}$ in \eqref{opt_alpha} satisfies the
requirement for power allocation considered in \cite{PA_NOMA} and
\cite{NOMA_Fair}: the achieved individual rate should exceed that in the TDMA scheme,
i.e., $r_i(\alpha^{\star}) \geq \frac{1}{2}\log_2(1 + PH_i)$ for $i = 1, 2$. Therefore, the
maximum minimum rate we consider in this paper achieves higher rates in addition to
better fairness between receivers.

With perfect CSI, the decoding order is determined based on whether $H_1 \geq H_2$
holds. The maximum minimum rate is
\begin{align}
\label{rmax}
  r_{\max} & =
\left\{
\begin{matrix}
  \log_2\left(1 + \frac{2H_1H_2P}{\sqrt{\left(H_1 + H_2\right)^2 + 4 H_1 H_2^2 P} + \left(H_1 +
H_2\right)}\right), & H_1 \geq H_2, \\
\log_2\left(1 + \frac{2H_1H_2P}{\sqrt{\left(H_1 + H_2\right)^2 + 4 H_1^2 H_2 P} + \left(H_1 +
H_2\right)}\right), & H_1 < H_2, \\
\end{matrix}
\right.
\end{align}
and the outage probability of minimum rate is
\begin{align}
\label{out_min}
  \mathtt{out}_{\min} =
\text{Pr}\left\{r_{\max} < r_{\rm th}\right\},
\end{align}
where $r_{\rm th}$ is the data rate at which the BS will transmit $s_1$ and $s_2$ for
every channel state.

\subsection{Limited Feedback}

In the limited-feedback scenario, for an arbitrary quantizer $q:
\mathpzc{R}\rightarrow\mathpzc{R}$, Receiver $i$ maps $H_i$ to $q\left(H_i\right)$,
and feeds the index of $q\left(H_i\right)$ back to the BS and the other receiver, as shown
in Fig.1. The index of $q\left(H_i\right)$ is decoded and the value of $q\left(H_i\right)$ is
recovered. The decoding order will be contingent on whether $q\left(H_1\right) \geq
q\left(H_2\right)$. For instance, when $q(H_1) \geq q(H_2)$, Receiver 1 is considered
``stronger'', while Receiver 2 is ``weaker''. In this case, the power allocation coefficient is computed
based on \eqref{opt_alpha} by treating $q\left(H_i\right)$ as $H_i$, i.e., $\alpha_{q} =
\frac{2q(H_2)}{\sqrt{\left(q(H_1) + q(H_2)\right)^2 + 4 q(H_1)q^2(H_2)P} + q(H_1) +
q(H_2)}$.

For rate adaptation, we shall design appropriate rates $r_{1, q}$ and $r_{2, q}$ for the
messages $s_1$ and $s_2$ based on limited feedback from the two receivers, such that
$r_{1, q}$ and $r_{2, q}$ can be supported and NOMA can be performed. The
corresponding rate loss will be
\begin{align}
  r_{\rm loss} = r_{\max} - \min\left\{r_{1, q}, r_{2, q}\right\}, \nonumber
\end{align}
where $r_{\max}$ is given in \eqref{rmax}.

For a constant-rate service, we care more about whether the current channels are strong
enough to support target data rate with the power allocation coefficient computed based
on limited feedback. The achieved outage probability is $\mathtt{out}_q =
\text{Pr}\left\{r_{q} < r_{\rm th}\right\}$, where
\begin{align}
r_{q} & = \min\left\{r_1\left(\alpha_{q}\right), r_2\left(\alpha_{q}\right)\right\} \nonumber\\
& =
\left\{
\begin{matrix}
  \min\left\{\log_2\left(1 + P \times \alpha_{q} \times H_1\right), \log_2\left(1 + \frac{PH_2\left(1 - \alpha_q\right)}
  {PH_2\alpha_q + 1}\right)\right\}, & q(H_1) \geq q(H_2), \\
\min\left\{\log_2\left(1 + \frac{PH_1\left(1 - \alpha_q\right)}
  {PH_1\alpha_q + 1}\right), \log_2\left(1 + P \times \alpha_{q} \times H_2\right)\right\}, & q(H_1) < q(H_2),
\end{matrix}
\right.\nonumber
\end{align}
The outage probability loss is given as
\begin{eqnarray}
\label{loss}
\mathtt{out}_{{\rm loss}, q}  = \mathtt{out}_{\min} - \mathtt{out}_q.
\end{eqnarray}

In the subsequent sections, we will propose efficient quantizers and investigate the
performance losses brought by limited feedback.

\section{Limited Feedback for Minimum Rate}
In this section, we first describe the proposed quantizer when the minimum rate is
the concern, then, we show the relationship between the rate loss and the feedback rates.

\subsection{Proposed Quantizer}

We consider a uniform quantizer $q_{r}: \mathpzc{R}\rightarrow\mathpzc{R}$, given
by\footnote{In $q_r$, ``$q$'' stands for quantizer, and the subscript ``$r$'' represents rate.}
\begin{align}
  q_{r}({x}) = \left\{
  \begin{matrix}
    \left\lfloor \frac{x}{\Delta}\right\rfloor\times \Delta , & x \leq T\Delta,\\
    T\Delta, & x > T\Delta,
  \end{matrix}
  \right. \nonumber
\end{align}
where the bin size $\Delta$ and the maximum number of bins $T \in \mathpzc{N}$ are
adjustable parameters. As shown in Fig. 2, $q_r(x)$ quantizes $x$ to the left boundary of
the interval where $x$ is. For any $x\in [n\Delta, (n+1)\Delta)$ when $0 \leq n \leq T-1$,
we have $q_{r}({x}) = n\Delta$ and $ x - \Delta \leq q_r(x) \leq x $; for any $x\in
[T\Delta, \infty)$, $q_{r}({x}) = T\Delta$ and $q_r(x) \leq x$.

\begin{figure}
\centering
  \includegraphics[width=3in]{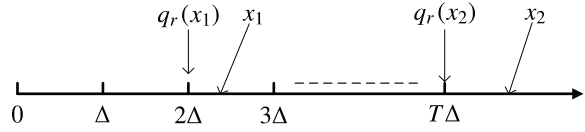}
  \caption{A uniform quantizer for minimum rate.}
\end{figure}

%

\subsection{Rate Adaptation and Loss}

When $q_{r}\left(\cdot\right)$ is employed, Receiver 2 is viewed as the ``weak'' receiver if
$q_{r}\left(H_1\right) \geq q_{r}\left(H_2\right)$. Then, according to \eqref{opt_alpha}, the
power allocation coefficient $\alpha_{q_r}$ is calculated as
\begin{align}
  {\alpha}_{q_{r}} =
  \left\{
  \begin{matrix}
  \frac{2 q_{r}\left(H_2\right)}{\sqrt{\left[q_{r}\left(H_1\right) + q_{r}\left(H_2\right)\right]^2
+ 4 q_{r}\left(H_1\right) q_{r}^2\left(H_2\right) P}
   + \left[q_{r}\left(H_1\right) + q_{r}\left(H_2\right)\right]},
   & q_{r}\left(H_1\right) > 0, q_{r}\left(H_2\right) > 0,\\
   0, & q_{r}\left(H_1\right) = 0 \text{ or } q_{r}\left(H_2\right) = 0,
  \end{matrix}
  \right.\nonumber
\end{align}
which satisfies $\log_2\left(1 + P \times {\alpha}_{q_{r}} \times q_{r}\left(H_1\right)\right)
= \log_2\left(1 + \frac{q_{r}\left(H_2\right)\times
\left(1-{\alpha}_{q_{r}}\right)}{{\alpha}_{q_{r}} \times q_{r}\left(H_2\right) +
\frac{1}{P}}\right)$ when ${\alpha}_{q_{r}} \neq 0$. To exploit the channels as much as
possible, we let the BS send messages $s_1$ and $s_2$ at rates of
\begin{align}
\label{achievedRate}
  {r}_{1, q_{r}}  = \log_2\left(1 + P \times {\alpha}_{q_{r}} \times q_{r}\left(H_1\right) \right),
  {r}_{2, q_{r}}  = \log_2\left(1 + \frac{P\times q_r(H_2)\left(1 - \alpha_{q_r}\right)}{P\times q_r(H_2)\alpha_{q_r} + 1}\right).
\end{align}

\begin{lemma}
  When $q_r\left(H_1\right) \geq q_r\left(H_2\right)$, the rates ${r}_{1, q_{r}}$ and ${r}_{2, q_{r}}$ in \eqref{achievedRate} can be achieved.
\end{lemma}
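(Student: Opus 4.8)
The plan is to show that each of the three data-decoding operations required by NOMA-with-SIC can be carried out over the \emph{true} channels $H_1, H_2$ at the announced rates, exploiting the single structural feature of the quantizer $q_r$: it always rounds down, i.e. $q_r(x) \leq x$ for every $x$. Because the rates in \eqref{achievedRate} are obtained by substituting $q_r(H_i)$ for $H_i$ in the full-CSI expressions, and the true channels are no weaker than their quantized surrogates, every rate ought to be supportable once monotonicity is in place.

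First I would record the monotonicity of the two relevant rate functions in the channel gain for a \emph{fixed} power split $\alpha = \alpha_{q_r} \in [0,1]$. The map $H \mapsto \log_2(1 + \alpha P H)$ is nondecreasing since $\alpha, P \geq 0$. For the interference-limited term, writing $\psi(H) = \frac{P(1-\alpha)H}{P\alpha H + 1}$, a one-line derivative computation gives $\psi'(H) = \frac{P(1-\alpha)}{(P\alpha H + 1)^2} \geq 0$, so $H \mapsto \log_2(1 + \psi(H))$ is nondecreasing as well. These two facts, together with $q_r(H_i) \leq H_i$, will settle the two ``own-channel'' decoding steps directly.

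Next I would verify the three conditions in the order SIC imposes them. (i) Receiver 2 decodes $s_2$ while treating $s_1$ as noise: its actual supportable rate is $\log_2(1 + \psi(H_2))$, and since $q_r(H_2) \leq H_2$ this dominates $r_{2, q_r} = \log_2(1 + \psi(q_r(H_2)))$. (ii) Receiver 1 decodes $s_1$ after cancellation: its actual rate $\log_2(1 + \alpha_{q_r} P H_1)$ dominates $r_{1, q_r}$ because $q_r(H_1) \leq H_1$. (iii) The delicate step: Receiver 1 must \emph{first} decode $s_2$ in order to cancel it, so I need its actual rate $\log_2(1 + \psi(H_1))$ to be at least $r_{2, q_r} = \log_2(1 + \psi(q_r(H_2)))$, i.e. $q_r(H_2) \leq H_1$.

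The main obstacle is precisely step (iii): the decoding order is chosen from the \emph{quantized} gains, so Receiver 1 is declared ``stronger'' merely because $q_r(H_1) \geq q_r(H_2)$, yet the true gains might satisfy $H_1 < H_2$, in which case the full-CSI inequality $r_{1\to 2} \geq r_2$ is no longer available to invoke. The resolution is the chain $q_r(H_2) \leq q_r(H_1) \leq H_1$, whose first inequality is the hypothesis of the lemma and whose second is the round-down property applied to $H_1$. Monotonicity of $\psi$ then gives $\psi(q_r(H_2)) \leq \psi(H_1)$, so Receiver 1 can indeed decode $s_2$ and perform SIC; the edge case $q_r(H_2) = 0$ is trivial since then $\alpha_{q_r} = 0$ and both rates vanish. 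With all three inequalities in hand, the announced rates are achievable, completing the argument.
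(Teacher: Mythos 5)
Your proposal is correct and follows essentially the same route as the paper's proof: monotonicity of the two rate expressions in the channel gain, the round-down property $q_r(x) \leq x$, and the key chain $q_r(H_2) \leq q_r(H_1) \leq H_1$ to guarantee that Receiver 1 can decode $s_2$ before cancellation, with the $\alpha_{q_r} = 0$ edge case handled trivially. The only cosmetic difference is that the paper inserts the intermediate quantity $\psi(q_r(H_1))$ in the SIC step, whereas you apply monotonicity to the chain in one shot; the substance is identical.
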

\begin{proof}
Based on the channel coding theorem \cite{Elements_IT}, if we can show the channel
capacities for $s_1$ and $s_2$ under the settings of NOMA are no smaller than ${r}_{1,
q_{r}}$ and ${r}_{2, q_{r}}$, the rates ${r}_{1, q_{r}}$ and ${r}_{2, q_{r}}$ can be
achieved with a probability of error that can be made arbitrarily small.

When $q_{r}\left(H_1\right) = 0 \text{ or } q_{r}\left(H_2\right) = 0$, it is trivial to verify that ${r}_{1, q_{r}}$ and
${r}_{2, q_{r}} $ can be supported. When $q_r\left(H_1\right) \geq
q_r\left(H_2\right) > 0$, the channel capacity for Receiver 2 by treating $s_1$ as noise is
$r_2 = \log_2\left(1 + \frac{H_2(1-{\alpha}_{q_{r}})}{{\alpha}_{q_{r}} \times H_2 +
\frac{1}{P}}\right) \geq \log_2\left(1 + \frac{q_r\left(H_2\right)\times
(1-{\alpha}_{q_{r}})}{{\alpha}_{q_{r}} \times q_r\left(H_2\right)  + \frac{1}{P}}\right) =
r_{2, q_r}$, since $\log_2\left(1 + \frac{x(1 - \alpha)}{x \alpha +\frac{1}{P}}\right)$ is an
increasing function of $x$ and $q_r(H_2) \leq H_2$. At the side of Receiver 1, the channel
capacity of $s_2$ with treating $s_1$ as noise is $r_{1\rightarrow 2} = \log_2\left(1 +
\frac{H_1(1-{\alpha}_{q_{r}})}{{\alpha}_{q_{r}} \times H_1  + \frac{1}{P}}\right) \geq
\log_2\left(1 + \frac{q_r\left(H_1\right)\times (1-{\alpha}_{q_{r}})}{{\alpha}_{q_{r}} \times
q_r\left(H_1\right)  + \frac{1}{P}}\right) \geq \log_2\left(1 +
\frac{q_r\left(H_2\right)\times (1-{\alpha}_{q_{r}})}{{\alpha}_{q_{r}} \times
q_r\left(H_2\right)  + \frac{1}{P}}\right) = r_{2, q_r}$, because $H_1 \geq q_r(H_1) \geq
q_r(H_2)$. Hence, $s_2$ can be decoded at Receiver 1 with an arbitrarily small error and
removed from $y_1$. After that, the channel capacity of $s_1$ is $r_1 = \log_2\left(1 +
P\times {\alpha}_{q_{r}}\times H_1\right) \geq \log_2\left(1 + P \times
{\alpha}_{q_{r}}\times q_r\left(H_1\right)\right) = r_{1, q_r}$. Therefore, the rates $r_{1,
q_r}$ and $r_{2, q_r}$ can be achieved for both $s_1$ and $s_2$.
\end{proof}

To sum up, it is the key fact of $q_r(x) \geq x$ that ensures the rates ${r}_{1, q_{r}}$ and
${r}_{2, q_{r}}$ in \eqref{achievedRate} can be supported. When $q_r(H_1) \geq
q_r(H_2)$, the rate loss is defined as
\begin{align}
  r_{{\rm loss}} 
   & =
   r_{\max} - \min \{r_{1, q_r}, r_{2, q_r}\}. \nonumber
\end{align}

%

\begin{lemma}
  The average rate loss of the quantizer $q_{r}(\cdot)$ is upper-bounded by:
  \begin{align}
  \label{rateBound}
    & \mathtt{E}\left[r_{{\rm loss}}\right] \leq \log_2\left(1 + C_0 \times P \times
    \max\left\{e^{-\frac{T\Delta}{\lambda_1}}, \Delta\right\}\right),
  \end{align}
  where $C_0$ is a positive constant that is independent of $P, T$ and $\Delta$.
\end{lemma}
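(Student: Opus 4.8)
The plan is to reduce the rate loss to a difference of the optimal minimum‑rate function evaluated at the true gains and at the quantized gains, to bound that difference by the total quantization error, and then to take expectations using the exponential tail of $H_i$.

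First I would note that the balancing identity satisfied by $\alpha_{q_r}$ (stated below \eqref{achievedRate}) forces ${r}_{1,q_r}={r}_{2,q_r}$, so $\min\{{r}_{1,q_r},{r}_{2,q_r}\}$ equals the optimal minimum rate of \eqref{rmax} with $H_1,H_2$ replaced by $q_r(H_1),q_r(H_2)$. Since $q_r(\cdot)$ is non‑decreasing, $H_1\ge H_2$ implies $q_r(H_1)\ge q_r(H_2)$, so the decoding order is preserved and no order‑mismatch occurs (the case $H_1<H_2$ is symmetric). Rationalizing \eqref{rmax} I would write $r_{\max}=\log_2\!\left(1+\phi(H_1,H_2)\right)$ with
\[
\phi(x,y)=\frac{S-(x+y)}{2y},\qquad S:=\sqrt{(x+y)^2+4xy^2P},
\]
so that, using $\phi\ge0$, $q_r(H_i)\le H_i$ and monotonicity of $\phi$,
\[
r_{\rm loss}=\log_2\frac{1+\phi(H_1,H_2)}{1+\phi(q_r(H_1),q_r(H_2))}\le\log_2(1+\psi),\quad \psi:=\phi(H_1,H_2)-\phi(q_r(H_1),q_r(H_2))\ge0.
\]

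The crux is a pointwise Lipschitz estimate for $\phi$ on $\{x\ge y\ge0\}$. I would compute the partials in closed form, $\partial_x\phi=\frac{(x+y)+2y^2P-S}{2yS}$ and, after using $S^2=(x+y)^2+4xy^2P$, $\partial_y\phi=\frac{x\,[S-(x+y)]}{2y^2S}$. From $x+y\le S\le(x+y)+\frac{2xy^2P}{x+y}$ one verifies that both partials are non‑negative and at most $P$ (in fact $\partial_x\phi\le P/2$), and that these bounds pass to the limit $y\to0^+$. Integrating along the path $(H_1,H_2)\to(H_1,q_r(H_2))\to(q_r(H_1),q_r(H_2))$, which stays inside $\{x\ge y\ge0\}$ because $q_r(H_1)\ge q_r(H_2)$, gives the key bound
\[
\psi\le P\big[(H_1-q_r(H_1))+(H_2-q_r(H_2))\big].
\]

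Finally I would take expectations. Jensen's inequality and concavity of $\log_2$ give $\mathtt{E}[r_{\rm loss}]\le\log_2(1+\mathtt{E}[\psi])$. Splitting the quantization error at the saturation threshold $T\Delta$ and using that $H_i$ is exponential with mean $\lambda_i$ yields, for each $i$,
\[
\mathtt{E}[H_i-q_r(H_i)]\le\Delta\,\text{Pr}\{H_i\le T\Delta\}+\int_{T\Delta}^{\infty}(x-T\Delta)f_{H_i}(x)\,dx\le\Delta+\lambda_i e^{-T\Delta/\lambda_i}.
\]
Summing over $i$ and using $\lambda_1\ge\lambda_2$ (hence $e^{-T\Delta/\lambda_2}\le e^{-T\Delta/\lambda_1}$) gives $\mathtt{E}[\psi]\le P\big(2\Delta+2\lambda_1 e^{-T\Delta/\lambda_1}\big)\le C_0P\max\{e^{-T\Delta/\lambda_1},\Delta\}$ with $C_0=2(1+\lambda_1)$, independent of $P,T,\Delta$, which is exactly \eqref{rateBound}. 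I expect the derivative estimate to be the main obstacle: extracting a uniform $O(P)$ Lipschitz constant from the unwieldy \eqref{opt_alpha}/\eqref{rmax} and confirming it down to $y=0$, where in the degenerate case $q_r(H_2)=0$ one instead uses the direct bound $\phi(H_1,H_2)\le PH_2$.
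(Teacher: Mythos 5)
Your proposal is correct, but it follows a genuinely different route from the paper's proof. The paper (Appendix~A) also reduces the rate loss to the SNR difference ${\sf snr}_{\rm loss}$ and finishes with Jensen's inequality, but in between it partitions the region $\left\{(H_1,H_2): q_r(H_1)\geq q_r(H_2)\right\}$ into four sub-regions: the edge region where some $H_i<\Delta$, the dominant region with $H_1\geq H_2$, the order-mismatch region where $q_r(H_1)=q_r(H_2)$ but $H_1<H_2$, and the saturation region where some $H_i\geq T\Delta$. It then bounds each piece separately with elementary algebra: ${\sf snr}_{\rm loss}\leq {\sf snr}_{\max}\leq \min\{H_1,H_2\}$ in the edge and saturation regions (integrating to $O(\Delta)$ and $O(e^{-T\Delta/\lambda_1})$, respectively), and a direct manipulation of the difference of the two $g_{\geq}$ fractions giving ${\sf snr}_{\rm loss}\leq 2\Delta$ in the two interior regions. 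You replace all of this case analysis with a single pointwise Lipschitz estimate: after rationalizing, both partials of $\phi$ lie in $[0,P]$ on $\{x\geq y\geq 0\}$ (your closed forms for $\partial_x\phi$ and $\partial_y\phi$ check out, as do the bounds $\partial_x\phi\leq P/2$ and $\partial_y\phi\leq P$), so $\psi\leq P\left[(H_1-q_r(H_1))+(H_2-q_r(H_2))\right]$ holds everywhere, and the saturation tail is absorbed automatically through $\mathtt{E}\left[H_i-q_r(H_i)\right]\leq \Delta+\lambda_i e^{-T\Delta/\lambda_i}$. Your route costs some calculus, but buys a unified argument, a cleaner constant ($C_0=2(1+\lambda_1)$ versus the paper's $\max\left\{4+\frac{\lambda_1}{\lambda_2},\lambda_2\right\}$), and easier generalization to any quantizer satisfying $q(x)\leq x$ and any fading law with a finite mean excess over the saturation level; the paper's route stays entirely at the level of algebraic inequalities.

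One imprecision to fix: your assertion that ``no order-mismatch occurs'' is not literally true. When $H_1<H_2$ it can happen that $q_r(H_1)=q_r(H_2)$, in which case the system applies the decoding order and power allocation for $q(H_1)\geq q(H_2)$ (i.e., the $g_{\geq}$ formula at the quantized pair), while $r_{\max}$ uses the $H_1<H_2$ branch of \eqref{rmax}; this is exactly the paper's region $\mathpzc{H}_3$, which it treats explicitly. Your argument survives because $g_{\geq}(z,z)=g_{<}(z,z)$, so at a tied quantized pair the achieved minimum rate still equals the $H_1<H_2$ formula evaluated at $\left(q_r(H_1),q_r(H_2)\right)$, and your symmetric path argument on $\{y\geq x\geq 0\}$ applies verbatim. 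That one-line justification should be stated rather than the issue dismissed.
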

\begin{proof}
  See Appendix A.
\end{proof}

We mainly focus on showing how the average rate loss changes with the bin size $\Delta$.
It is beyond the scope of this paper to find the tightest bounds, i.e., the smallest value for
$C_0$. A value for $C_0$ which is derived from the proof in Appendix A is $C_0 =
\max\left\{4+\frac{\lambda_1}{\lambda_2}, \lambda_2\right\}$.

It is observed from \eqref{rateBound} that when
$e^{-\frac{T\Delta}{\lambda_1}} > \Delta$, the maximum number of bins, $T$, can degrade the rate. To eliminate this effect, we choose $T$ such that
$e^{-\frac{T\Delta}{\lambda_1}} = \Delta$, which yields $T =
\frac{\lambda_1}{\Delta}\log\frac{1}{\Delta}$.\footnote{Approaching the performance in
the full-CSI case generally requires a small value for $\Delta$. We mainly consider the
case where $\Delta \leq 1$ in this paper.} With an appropriate value for $T$, we can
make the rate loss decrease at least linearly with respect to $\Delta$.

\begin{corollary}
  When $T=\frac{\lambda_1}{\Delta}\log\frac{1}{\Delta}$, the average rate loss of the quantizer
$q_{r}(\cdot)$ is upper-bounded by:
  \begin{align}
  \label{rateBound_Linear}
  \mathtt{E}\left[r_{{\rm loss}}\right] \leq \log_2\left(1 + C_0 \times P \times \Delta\right) \leq C_1 \times P \times \Delta,
  \end{align}
  where $C_0$ and $C_1$ are positive constants that are independent of $P$ and $\Delta$.
\end{corollary}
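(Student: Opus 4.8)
The plan is to derive the corollary directly from Lemma~2 by substituting the prescribed value of $T$ and then peeling off the logarithm with an elementary estimate. Lemma~2 already supplies the inequality $\mathtt{E}\left[r_{{\rm loss}}\right] \leq \log_2\left(1 + C_0 P \max\left\{e^{-\frac{T\Delta}{\lambda_1}}, \Delta\right\}\right)$ with a constant $C_0$ that does not depend on $P$, $T$, or $\Delta$, so the entire task reduces to evaluating the right-hand side at $T = \frac{\lambda_1}{\Delta}\log\frac{1}{\Delta}$ and bounding the resulting logarithm by a linear function of $P\Delta$.

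First I would substitute the chosen $T$ into the exponential term. With $T = \frac{\lambda_1}{\Delta}\log\frac{1}{\Delta}$ one has $\frac{T\Delta}{\lambda_1} = \log\frac{1}{\Delta}$, where $\log$ denotes the natural logarithm consistent with the base of the exponential; hence $e^{-\frac{T\Delta}{\lambda_1}} = e^{-\log\frac{1}{\Delta}} = \Delta$. Consequently the two arguments of the maximum coincide, so $\max\left\{e^{-\frac{T\Delta}{\lambda_1}}, \Delta\right\} = \Delta$, and the bound of Lemma~2 collapses to the first inequality $\mathtt{E}\left[r_{{\rm loss}}\right] \leq \log_2\left(1 + C_0 P \Delta\right)$ of the corollary.

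For the second inequality I would invoke the standard estimate $\ln(1+x) \leq x$, valid for all $x \geq 0$, applied with $x = C_0 P \Delta \geq 0$. Writing $\log_2(1+x) = \frac{\ln(1+x)}{\ln 2}$ then yields $\log_2\left(1 + C_0 P \Delta\right) \leq \frac{C_0}{\ln 2} P \Delta$, so it suffices to set $C_1 = \frac{C_0}{\ln 2}$; since $C_0$ is independent of $P$ and $\Delta$, so is $C_1$, exactly as claimed.

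The only subtlety, rather than a genuine obstacle, is that $T$ must lie in $\mathpzc{N}$, whereas $\frac{\lambda_1}{\Delta}\log\frac{1}{\Delta}$ need not be an integer. I would handle this by observing that the right-hand side of Lemma~2 is non-increasing in $T$ once $e^{-\frac{T\Delta}{\lambda_1}}$ has dropped to or below $\Delta$; therefore replacing $T$ by any integer $T \geq \frac{\lambda_1}{\Delta}\log\frac{1}{\Delta}$, for instance $\lceil \frac{\lambda_1}{\Delta}\log\frac{1}{\Delta} \rceil$, keeps the maximum equal to $\Delta$ and preserves both inequalities. Under the paper's standing assumption $\Delta \leq 1$ the quantity $\log\frac{1}{\Delta}$ is nonnegative, so such an integer $T$ exists and the argument is complete.
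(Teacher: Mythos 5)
Your proof is correct and matches the paper's (implicit) argument exactly: the paper also chooses $T$ so that $e^{-\frac{T\Delta}{\lambda_1}} = \Delta$, collapses the maximum in Lemma~2 to $\Delta$, and absorbs the logarithm via $\log_2(1+x) \leq x/\ln 2$. Your handling of the integrality of $T$ (replacing it with $\lceil \frac{\lambda_1}{\Delta}\log\frac{1}{\Delta}\rceil$, which keeps $e^{-\frac{T\Delta}{\lambda_1}} \leq \Delta$) is a small rigor improvement the paper silently skips, but it does not change the approach.
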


\subsection{Feedback Rate}

Rather than the naive fixed-length encoding (FLE) for feedback information which requires
$\lceil\log_2(T+1)\rceil$ bits per receiver per channel state, we consider the more efficient
variable-length encoding (VLE)
\cite{Erdem_VLQ_IT,Xiaoyi_Multicast_WCOM}.\footnote{For example, when $\Delta
= 0.01$ and $\lambda_1 = 1$, $T = \frac{\lambda_1}{\Delta}\log\frac{1}{\Delta}
\approx 460.5$. When FLE is adopted, the feedback rate per receiver will be
$\lceil\log_2(T+1)\rceil$ = 9 bits per channel state. As shown by the theoretical analysis
and numerical simulations later, VLE will cost far fewer bits.} An example of VLE that can
be applied here is $b_{0} = \{0\}$, $b_1 = \{1\}$, $b_2 = \{00\}$, $b_3 = \{01\}$ and so on,
sequentially for all codewords in the set $\{0, 1, 00, 01, 10, 11, \ldots\}$, where
$b_n$ is the binary string to be fed back when $q_{r}(x) = n\Delta$. The length of $b_n$
is $\floor{\log_2(n+2)}$. The following theorem derives an upper bound on the rate loss with respect to the feedback rate of Receiver $i$ (denoted by $R_{r, {\rm VLE}, i}$).

\begin{theorem}
When variable-length encoding is applied to the quantizer $q_{r}(\cdot)$, the rate loss
decays at least exponentially with the number of feedback bits:
\begin{align}
\label{rate_VLE}
\mathtt{E}\left[r_{{\rm loss}}\right] \leq \log_2\left(1 + C_2 \times P \times
2^{-\min\left\{R_{r, {\rm VLE}, 1}, R_{r, {\rm VLE}, 2}\right\}}\right) \leq C_3 \times P \times
2^{-\min\left\{R_{r, {\rm VLE}, 1}, R_{r, {\rm VLE}, 2}\right\}},
\end{align}
where $C_2$ and $C_3$ are positive constants independent of $P$ and
$R_{r, {\rm VLE}, i}$.
\end{theorem}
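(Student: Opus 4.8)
The plan is to combine the per-$\Delta$ rate-loss bound already established in Corollary~1 with an upper bound on the VLE feedback rate expressed in terms of $\Delta$; the heart of the argument is to show that driving $\Delta\to 0$ forces the feedback rate to grow only logarithmically in $1/\Delta$, so that $\Delta$ itself is at most exponentially small in $\min\{R_{r,{\rm VLE},1},R_{r,{\rm VLE},2}\}$.

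First I would write the feedback rate explicitly. Under the stated VLE, bin index $n$ is encoded by the string $b_n$ of length $\lfloor\log_2(n+2)\rfloor$, so if $N_i\in\{0,1,\dots,T\}$ denotes the bin of $H_i$ (that is, $N_i=\min\{\lfloor H_i/\Delta\rfloor,T\}$), then $R_{r,{\rm VLE},i}=\mathtt{E}\!\left[\lfloor\log_2(N_i+2)\rfloor\right]$. I would drop the floor and apply Jensen's inequality, using the concavity of $x\mapsto\log_2(x+2)$, to obtain $R_{r,{\rm VLE},i}\le\log_2\!\left(\mathtt{E}[N_i]+2\right)$. Since $N_i\le H_i/\Delta$ holds deterministically, $\mathtt{E}[N_i]\le\mathtt{E}[H_i]/\Delta=\lambda_i/\Delta$, which yields $R_{r,{\rm VLE},i}\le\log_2\frac{\lambda_i+2\Delta}{\Delta}$. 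This is precisely the logarithmic growth we need.

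Next I would invert this relation. Rearranging gives $\Delta\left(2^{R_{r,{\rm VLE},i}}-2\right)\le\lambda_i$, so for feedback rates bounded away from one bit we obtain $\Delta\le C\,2^{-R_{r,{\rm VLE},i}}$ with a constant $C$ depending only on $\lambda_i$ (and hence, taking the larger of the two, on $\lambda_1$). Because this bound holds for both $i=1,2$, and because $2^{-R_{r,{\rm VLE},i}}\le 2^{-\min\{R_{r,{\rm VLE},1},R_{r,{\rm VLE},2}\}}$, I can select the receiver attaining the minimum feedback rate and conclude $\Delta\le C\,2^{-\min\{R_{r,{\rm VLE},1},R_{r,{\rm VLE},2}\}}$.

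Finally I would substitute this into the bound of Corollary~1. With the choice $T=\frac{\lambda_1}{\Delta}\log\frac{1}{\Delta}$ already fixed there, $\mathtt{E}\left[r_{{\rm loss}}\right]\le\log_2(1+C_0 P\Delta)$; replacing $\Delta$ by its exponential bound yields the first inequality of \eqref{rate_VLE} with $C_2=C_0 C$, and the elementary estimate $\log_2(1+x)\le x/\ln 2$ then gives the second with $C_3=C_2/\ln 2$. I expect the only real care (rather than a deep obstacle) to lie in the Jensen/floor step and in confirming that $\mathtt{E}[N_i]\le\lambda_i/\Delta$ is unaffected by the saturation at $T$: both survive because truncating $N_i$ at $T$ can only decrease it, so the mean bound and the logarithmic feedback-rate growth remain intact. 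The remaining work is the bookkeeping of staying in the $\Delta\le 1$ regime and absorbing all $\lambda_i$-dependence into constants that are independent of $P$ and of $R_{r,{\rm VLE},i}$.
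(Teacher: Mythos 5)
Your proposal is correct, and it reaches the same two-step structure as the paper (bound $R_{r,{\rm VLE},i}$ by $\log_2(1/\Delta)+O(1)$, invert to get $\Delta \leq C\,2^{-R_{r,{\rm VLE},i}}$, substitute into Corollary~1), but the key step is executed by a genuinely different argument. The paper writes the feedback rate as the explicit sum $\sum_{n} \lfloor\log_2(n+2)\rfloor \Pr\{N_i = n\}$, extends it to an infinite series over the exponential bin probabilities, and invokes the cited inequality $\sum_{n\geq 1} e^{-\beta n}\log n \leq \frac{e^{-\beta}}{\beta}\bigl[2+\log\bigl(1+\frac{1}{\beta}\bigr)\bigr]$ to obtain $R_{r,{\rm VLE},i} \leq \frac{2}{\log 2}+1+\log_2\bigl(1+\frac{\lambda_i}{\Delta}\bigr)$. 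You instead drop the floor, apply Jensen's inequality to the concave map $x\mapsto \log_2(x+2)$, and use $\mathtt{E}[N_i]\leq \mathtt{E}[H_i]/\Delta = \lambda_i/\Delta$ to get $R_{r,{\rm VLE},i}\leq \log_2\bigl(2+\frac{\lambda_i}{\Delta}\bigr)$. Your bound is both shorter and slightly tighter (it saves the additive constant $1+\frac{2}{\log 2}$), and it is more general: it requires only that $H_i$ have finite mean, not the exponential form of the density, which dovetails with the paper's own footnote that the results extend to other channel distributions. The paper's series computation, by contrast, is tied to the exponential tail but is self-contained at the level of term-by-term estimates. Both approaches share the same caveat at the inversion step (the paper's ``when $R_{r,{\rm VLE},i}$ is sufficiently large,'' your ``bounded away from one bit,'' needed so that $2^{R}-2 \geq \tfrac{1}{2}2^{R}$), and your handling of the saturation at $T$ (truncation only decreases $N_i$) is exactly the observation needed to make the mean bound legitimate.
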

\begin{proof}
The feedback rate of Receiver $i$ is derived as
\begin{align}
R_{r, {\rm VLE}, i}
& = \sum_{n = 0}^{T-1} \floor{\log_2(n+2)} \int_{n\Delta}^{(n+1)\Delta} f_{H_i}(H_i) {\rm d}H_i
+ \floor{\log_2(T+2)} \int_{T\Delta}^{\infty} f_{H_i}(H_i) {\rm d}H_i \nonumber\\
& \leq \sum_{n = 0}^{\infty} \floor{\log_2(n+2)} \int_{n\Delta}^{(n+1)\Delta} f_{H_i}(H_i) {\rm d}H_i \nonumber\\
& \leq \sum_{n = 0}^{\infty} \underbrace{\log_2(n+2)}_{\leq \log_2(n+1)+1} {\int_{n\Delta}^{(n+1)\Delta}
 \frac{e^{-\frac{H_i}{\lambda_i}}}{\lambda_i} {\rm d}H_i} \nonumber\\
& \leq \sum_{n = 0}^{\infty} e^{-\frac{n\Delta}{\lambda_i}} \left(1-e^{-\frac{\Delta}{\lambda_i}}\right) \times \log_2(n+1)
+ \underbrace{\sum_{n = 0}^{\infty} 1 \times \int_{n\Delta}^{(n+1)\Delta} \frac{e^{-\frac{H_i}{\lambda_i}}}{\lambda_i} {\rm d}H_i}_{=1} \nonumber\\
& = 1 + {\left(1-e^{-\frac{\Delta}{\lambda_i}}\right)}
\sum_{n = 0}^{\infty}e^{-\frac{n\Delta}{\lambda_i}} \times \log_2(n+1)
 \leq 1 + \frac{\Delta}{\lambda_i}
\sum_{n = 0}^{\infty}e^{-\frac{n\Delta}{\lambda_i}} \times \log_2(n+1).\nonumber
\end{align}
With the help of \cite[Eq.(22)]{Xiaoyi_Multicast_WCOM}: $\sum_{n = 1}^{\infty}
e^{-\beta n}\log(n) \leq \frac{e^{-\beta}}{\beta}\left[2 + \log\left(1 +
\frac{1}{\beta}\right)\right]$, by letting $\beta = e^{-\frac{\Delta}{\lambda_i}}$, we have
\begin{align}
  \sum_{n = 0}^{\infty}
e^{-\frac{n\Delta}{\lambda_i}}  \times\log_2(n+1)
 & = \sum_{n = 1}^{\infty}
e^{-\frac{n\Delta}{\lambda_i}}  \times\log_2(n+1)\nonumber\\
& = \frac{e^{\frac{\Delta}{\lambda_i}}}{\log 2} \sum_{n = 2}^{\infty}
e^{-\frac{n\Delta}{\lambda_i}}  \times\log(n)
 \leq
\frac{1}{\frac{\Delta}{\lambda_i}}\left[\frac{2}{\log 2} + \log_2\left(1 +
\frac{1}{\frac{\Delta}{\lambda_i}}\right)\right].\nonumber
\end{align}
Then,  $R_{r, {\rm VLE}, i}$ is upper-bounded by\footnote{Although it is intractable to
derive a closed-form expression for $R_{r, {\rm VLE}, i}$, the upper bound in
\eqref{fr_bound} provides a good estimate on how many feedback bits will be consumed. }
\begin{align}
\label{fr_bound}
R_{r, {\rm VLE}, i} \leq \frac{2}{\log 2} + 1  + \log_2\left(1 + \frac{1}{\frac{\Delta}{\lambda_i}}\right),
\end{align}
or equivalently (when $R_{r, {\rm VLE}, i}$ is sufficiently large),
\begin{align}
\label{deltaBoundVle}
\Delta \leq \frac{\lambda_i}{2^{R_{r, {\rm VLE}, i}-1-\frac{2}{\log 2}}-1}
\leq \frac{\lambda_i}{2^{R_{r, {\rm VLE}, i}-2-\frac{2}{\log 2}}} = C_4 \times 2^{-R_{r, {\rm VLE}, i}}.
\end{align}
Substituting \eqref{deltaBoundVle} into \eqref{rateBound_Linear} proves the theorem.
\end{proof}


Therefore, we can see that appropriate values for $T$ and the use of VLE enable the rate loss
to decrease at least exponentially with the feedback rate.

\section{Limited Feedback for Outage Probability}

Outage probability is an important performance metric that evaluates the chance that the
channels are not strong enough to support the constant-rate data service
\cite{spaceTimeCodingBook}. An ideal quantizer for outage probability should have at least
 the following properties: (\rmnum{1}) The outage probability loss should decrease
toward zero when the feedback rate increases toward infinity. (\rmnum{2}) The outage
probability loss should approach zero whenever $P\rightarrow 0$ or $P\rightarrow
\infty$. The intuition of (\rmnum{2}) comes from the fact that when $P$ is adequately
small, the outage probabilities of both the full-CSI case and the quantizer should be close
to one; when $P$ is significantly large, both outage probabilities should be almost zero.
Then, the outage probability losses in both scenarios go to zero.

\subsection{Proposed Quantizer}
As portrayed in Fig. \ref{out_q}, the uniform quantizer proposed for outage probability is given by
\begin{align}
\label{quantizerBOut}
  q_{o}({x}) = \left\{
  \begin{matrix}
    \left\lceil \frac{x}{\Delta}\right\rceil\times \Delta , & x \leq T\Delta,\\
    (T+1)\Delta, & x > T\Delta.
  \end{matrix}
  \right.
\end{align}

\begin{figure}
\centering
  \includegraphics[width=4in]{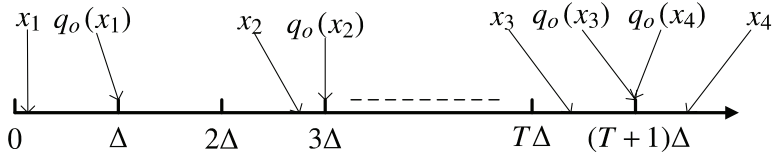}
  \caption{A uniform quantizer for outage probability.}
  \label{out_q}
\end{figure}


The only difference between $q_o(\cdot)$ and $q_r(\cdot)$ lies in whether the left or right
boundary of the interval is used as the reconstruction point. The quantizer proposed for rate
adaptation cannot be directly inherited because when the channel is very weak (i.e., ${H}_i
< \Delta$), it will be quantized as zero (i.e., $q_r(H_i) = 0$), which will result in a
zero-value power allocation coefficient, i.e.,$\alpha_{q_r} = 0$, and a minimum rate of
zero, i.e., $r_1\left(\alpha_{q_r}\right) = 0$ or $r_2\left(\alpha_{q_r}\right)$. In this
case, the transmission will surely encounter an outage. However, even a weak channel
reserves the possibility of non-outage, so long as the transmit power $P$ is large enough.
Therefore, an appropriate quantizer for outage probability should not quantize any value
to zero. The quantizer in \eqref{quantizerBOut} fulfills this requirement.

\subsection{Outage Probability Loss}

\begin{lemma}
  The outage probability loss of the quantizer $q_{o}(\cdot)$ is upper-bounded by:
  \begin{align}
  \label{outBound}
    & \mathtt{out}_{{\rm loss}, q_{o}} \leq C_5 \times e^{-\frac{C_6}{P}}
    \times \frac{1 + \sqrt{P}}{P} \times \max\left\{\Delta^{\frac{1}{2}},
          \Delta^{\frac{3}{2}},  e^{-\frac{T\Delta}{\lambda_1}}\right\},
  \end{align}
  where $C_5$ and $C_6$ are positive constants that are independent of $P$ and $\Delta$.
\end{lemma}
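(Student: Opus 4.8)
The plan is to bound the \emph{magnitude} of the loss, i.e. the excess outage of the quantizer over the optimum. Because the full-CSI allocation is optimal, $r_{q_o}\le r_{\max}$ for every realization (the quantized strategy, using $\alpha_{q_o}$ and the order dictated by $q_o$, is merely feasible and cannot beat $r_{\max}=\max_{\alpha}\min\{r_1(\alpha),r_2(\alpha)\}$), so $\{r_{\max}<r_{\rm th}\}\subseteq\{r_{q_o}<r_{\rm th}\}$ and $\mathtt{out}_{\min}\le\mathtt{out}_{q_o}$. Thus the defined loss $\mathtt{out}_{{\rm loss},q_o}=\mathtt{out}_{\min}-\mathtt{out}_{q_o}$ is nonpositive, and the meaningful quantity that \eqref{outBound} controls is the excess
\begin{align}
\mathtt{out}_{q_o}-\mathtt{out}_{\min} = \Pr\{r_{q_o}<r_{\rm th}\le r_{\max}\}.\nonumber
\end{align}
The whole task therefore reduces to bounding the probability of the thin region in the $(H_1,H_2)$-plane where the optimal scheme clears the threshold but the quantized one does not.

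I would then split this region with a free threshold $\tau>0$. On the event one has $r_{\max}-r_{q_o}>r_{\max}-r_{\rm th}\ge 0$, so
\begin{align}
\Pr\{r_{q_o}<r_{\rm th}\le r_{\max}\} \le \Pr\{r_{\max}-r_{q_o}>\tau,\ r_{\max}\ge r_{\rm th}\} + \Pr\{r_{\rm th}\le r_{\max}<r_{\rm th}+\tau\}.\nonumber
\end{align}
The first term is a tail of the pointwise rate gap, controlled by Markov's inequality together with the $q_o$-analogue of Lemma~2 / Corollary~1: since $0\le q_o(H_i)-H_i\le\Delta$ on $[0,T\Delta]$ and $q_o$ saturates only for $H_1>T\Delta$, the Appendix~A computation gives an average gap of order $P\max\{\Delta,e^{-T\Delta/\lambda_1}\}$, and restricting to $\{r_{\max}\ge r_{\rm th}\}$ (which forces $H_1,H_2\gtrsim 1/P$, where the exponential densities are small) supplies the $e^{-C_6/P}$ prefactor. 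The second term is the mass of the optimal rate in a band of width $\tau$ above threshold, obtained by integrating $f_{H_1}f_{H_2}$ over $\{r_{\rm th}\le r_{\max}<r_{\rm th}+\tau\}$, which is of order $\tau$ times the same density factor. A choice $\tau\sim\sqrt{P\,\max\{\Delta,e^{-T\Delta/\lambda_1}\}}$ balances the two and produces the square-root dependence; the $\Delta^{1/2}$ term dominates for $\Delta\le1$, the $\Delta^{3/2}$ term arises from the higher-order part of the band integral, and the $e^{-T\Delta/\lambda_1}$ term from the saturation tail $\Pr\{H_1>T\Delta\}=e^{-T\Delta/\lambda_1}$ that also feeds the first term.

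The hard part will be the precise evaluation of the near-boundary band and of the restricted expectation, since both require the geometry of the outage curve $r_{\max}(H_1,H_2)=r_{\rm th}$ defined through the square-root expression in \eqref{rmax}. I would parametrize this curve, treating the two decoding-order branches $H_1\gtrless H_2$ separately, and bound the density integral along it: the factor $\frac{1+\sqrt P}{P}$ emerges from the Jacobian $|\nabla r_{\max}|$ in the regime where the term $4H_1H_2^2P$ inside the root dominates at high power, while $e^{-C_6/P}$ comes from the smallest channel magnitudes attained on the curve. A secondary difficulty is the wrong-order sliver where $q_o(H_1)\ge q_o(H_2)$ but $H_1<H_2$; this can occur only when the two channels share a bin, i.e. $|H_1-H_2|<\Delta$, so that SIC at the nominally strong receiver may fail and the pointwise gap need not be small, but the event has probability $O(\Delta)$ near the boundary and is absorbed into the first term.

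Finally I would verify that the constants $C_5,C_6$ are genuinely independent of $P$ and $\Delta$, and confirm the two design properties stated before the lemma: the right-hand side of \eqref{outBound} tends to zero as the feedback rate grows (through $\Delta\to0$ and $T\Delta\to\infty$), and it vanishes as $P\to0$ (via the $\frac{1+\sqrt P}{P}$ factor after the $e^{-C_6/P}\to0$ dominates) and as $P\to\infty$, matching the intuition that both outage probabilities collapse to $1$ or $0$ in those limits.
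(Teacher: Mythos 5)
Your opening reduction is sound and matches what the paper actually does: Appendix~B bounds the magnitude of the loss by integrating the quantized-outage indicator only over the region where the full-CSI scheme is \emph{not} in outage, i.e.\ it bounds $\Pr\{r_{q_o}<r_{\rm th}\le r_{\max}\}$, and your identification of the saturation tail ($H_i>T\Delta$, giving $e^{-T\Delta/\lambda_1}$) and of the shared-bin sliver (the paper's regions $\mathpzc{I}_4$, $\mathpzc{I}_5$) is also correct. The genuine gap is your central mechanism: Markov's inequality on the rate gap plus a free threshold $\tau$ cannot produce the factor $\frac{1+\sqrt{P}}{P}$, and that factor is the substance of the lemma (it is what makes the bound vanish as $P\to\infty$ and what Lemma~4's diversity analysis is built on). Quantitatively, your own calibration is inconsistent: by Markov and the Lemma-2-type bound, the first term is of order $\frac{P\max\{\Delta,\,e^{-T\Delta/\lambda_1}\}}{\tau}e^{-c/P}$, while the band term is of order $\frac{\tau}{P}e^{-c/P}$ (an SNR band of width $\sim\tau/P$ times an $O(1)$ density of $g_{\geq}(H_1,H_2)$ near $\frac{\beta}{P}$, $\beta=2^{r_{\rm th}}-1$); with your choice $\tau\sim\sqrt{P\Delta}$ these are $\sqrt{P\Delta}$ and $\sqrt{\Delta/P}$, off from each other by a factor of $P$, and the resulting bound $\sqrt{P\Delta}$ diverges with $P$. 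Even with the optimal balance, $\sqrt{A/\tau\cdot B\tau}$-type arguments return the geometric mean of the two ingredients, so the best you can get this way is roughly $\sqrt{\Delta}\,e^{-c/P}$ (no decay in $P$), and the exponent in the tail term degrades from $e^{-T\Delta/\lambda_1}$ to $e^{-T\Delta/(2\lambda_1)}$. None of these establish \eqref{outBound}.

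The reason Markov is inherently lossy here is that the global average gap is dominated by realizations far above the threshold, where the gap is of order $\Delta/{\sf snr}_{\max}$ and harmless, whereas near the outage boundary the quantization damage is \emph{multiplicative}: the paper's pointwise inequalities \eqref{exp01}--\eqref{exp03} and \eqref{last_inequality} show that, on the main regions, ``quantized outage and full-CSI non-outage'' forces
\begin{align}
\frac{\beta}{P}\;\le\; g_{\geq}(H_1,H_2)\;<\;\frac{\beta}{P}\left(1+\frac{2\Delta}{H_2}\right),\nonumber
\end{align}
a band whose width is $O\!\left(\frac{\Delta}{PH_2}\right)$, i.e.\ $O(\Delta/P)$ for typical $H_2$ --- far narrower than any band of fixed rate-width $\tau$ can capture. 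The paper then integrates the joint density over this self-shrinking band via the change of variables $\phi=g_{\geq}(H_1,H_2)$, with Jacobian and modified-Bessel-function estimates, and treats the edge regions $H_i\le\Delta$ and $H_i>T\Delta$ separately; these edge regions are where the $\Delta^{1/2}$, $\Delta^{3/2}$ and $e^{-T\Delta/\lambda_1}$ terms and the $\frac{1+\sqrt{P}}{P}$ prefactor actually come from. So your step~2 must be replaced by this conditional-on-$H_2$ pointwise band implication; once that is done, the free parameter $\tau$ and the Markov step are unnecessary, and you are reconstructing the paper's region-by-region proof.
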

\begin{proof}
  See Appendix B.
\end{proof}

Different from the rate loss which increases linearly in terms of $P$, because of the term
$e^{-\frac{C_6}{P}} \times \frac{1 + \sqrt{P}}{P} $, the upper bound on
$\mathtt{out}_{{\rm loss}, q_{o}}$ in \eqref{outBound} converges to zero either when
$P\rightarrow 0$ or $P\rightarrow \infty$.

To have good performance, we mainly focus on the quantizers with small granularities. When $\Delta \leq 1$, we have $\Delta^{\frac{3}{2}} \leq
\Delta^{\frac{1}{2}}$, and the upper bound in \eqref{outBound} is restricted by
$\max\left\{e^{-\frac{T\Delta}{\lambda_1}}, \Delta^{\frac{1}{2}}\right\}$. For fixed
$\Delta$, the optimal choice for $T$ should satisfy $e^{-\frac{T\Delta}{\lambda_1}} =
\Delta^{\frac{1}{2}}$, given by $T = \frac{\lambda_1}{2\Delta}\log\frac{1}{\Delta}$.
\begin{corollary}
  When $0 < \Delta \leq 1$ and $T=\frac{\lambda_1}{2\Delta}\log\frac{1}{\Delta}$, the average rate loss of the quantizer
$q_{o}(\cdot)$ is upper-bounded by:
  \begin{align}
   \label{outBound_Linear}
    & \mathtt{out}_{{\rm loss}, q_{o}} \leq C_5 \times e^{-\frac{C_6}{P}}
    \times \frac{1 + \sqrt{P}}{P} \times \Delta^{\frac{1}{2}},
  \end{align}
  where $C_5$ and $C_6$ are positive constants independent of $P$ and $\Delta$.
\end{corollary}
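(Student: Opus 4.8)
The plan is to specialize the general bound of the preceding Lemma (inequality~\eqref{outBound}), which already isolates all of the $P$-dependence in the factor $e^{-C_6/P}\,\frac{1+\sqrt{P}}{P}$ and confines the $\Delta$- and $T$-dependence to the single quantity $\max\{\Delta^{1/2}, \Delta^{3/2}, e^{-T\Delta/\lambda_1}\}$. Since all the analytic work has already been done in the Lemma, proving the Corollary amounts to choosing the free parameter $T$ optimally and simplifying this maximum; the constants $C_5$ and $C_6$ are inherited unchanged from the Lemma.

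First I would dispose of the $\Delta^{3/2}$ term. Under the hypothesis $0 < \Delta \leq 1$ we have $\Delta^{3/2} \leq \Delta^{1/2}$, so the three-way maximum reduces to $\max\{\Delta^{1/2}, e^{-T\Delta/\lambda_1}\}$. The two surviving terms trade off against each other: increasing $T$ shrinks the exponential but leaves $\Delta^{1/2}$ untouched, so for fixed $\Delta$ the best choice of $T$ is the one that balances them. Setting
\begin{align}
e^{-\frac{T\Delta}{\lambda_1}} = \Delta^{\frac{1}{2}} \nonumber
\end{align}
and solving for $T$ yields $T = \frac{\lambda_1}{2\Delta}\log\frac{1}{\Delta}$, precisely the value stated in the Corollary; one checks directly that $e^{-T\Delta/\lambda_1} = e^{-\frac{1}{2}\log(1/\Delta)} = \Delta^{1/2}$.

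With this choice both remaining terms equal $\Delta^{1/2}$, so $\max\{\Delta^{1/2}, e^{-T\Delta/\lambda_1}\} = \Delta^{1/2}$, and substituting into~\eqref{outBound} collapses the bound to
\begin{align}
\mathtt{out}_{{\rm loss}, q_o} \leq C_5 \times e^{-\frac{C_6}{P}} \times \frac{1+\sqrt{P}}{P} \times \Delta^{\frac{1}{2}}, \nonumber
\end{align}
which is exactly~\eqref{outBound_Linear}. There is no genuine obstacle here: the only points requiring a moment's care are verifying that the prescribed $T$ really equalizes the exponential and the $\Delta^{1/2}$ term, and that $T$ remains a valid nonnegative parameter, which holds since $\log\frac{1}{\Delta} \geq 0$ for $\Delta \leq 1$. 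The substantive content---the derivation of the bound~\eqref{outBound} itself---lives entirely in the Lemma (Appendix B), and this Corollary is merely its optimized specialization.
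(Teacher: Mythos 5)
Your proposal is correct and follows exactly the paper's own reasoning: the paper likewise uses $\Delta \leq 1$ to drop the $\Delta^{3/2}$ term from the bound in Lemma 3, then balances $e^{-T\Delta/\lambda_1}$ against $\Delta^{1/2}$ to arrive at $T = \frac{\lambda_1}{2\Delta}\log\frac{1}{\Delta}$, after which substitution into \eqref{outBound} yields \eqref{outBound_Linear} with the same constants. Nothing is missing.
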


\subsection{Feedback Rate}

The same VLQ for rate adaptation can be applied to $q_{o}(\cdot)$ for a better
utilization of the feedback resource. From \eqref{fr_bound} and \eqref{deltaBoundVle},
we obtain $R_{o, {\rm VLE}, i} \leq \frac{2}{\log 2} + 1  + \log_2\left(1 +
\frac{1}{\frac{\Delta}{\lambda_i}}\right)$ and $\Delta \leq C_4 \times 2^{-R_{o, {\rm
VLE}, i}}$. Thus, $\Delta^{\frac{1}{2}} \leq \sqrt{ C_4 \times 2^{-R_{o, {\rm VLE}, i}}} =
C_{7} \times 2^{-\frac{R_{o, {\rm VLE}, i}}{2}} \leq C_7 \times  2^{-\frac{\min\left\{R_{o, {\rm VLE},
1}, R_{o, {\rm VLE}, 2}\right\}}{2}}$. The following theorem states the relationship
between the outage probability loss of $q_{o}(\cdot)$ and the feedback rates.

\begin{theorem}
When variable-length encoding is applied to the quantizer $q_{o}(\cdot)$, the rate loss
decays at least exponentially as:
\begin{align}
\mathtt{out}_{{\rm loss}, q_{o}} \leq
C_{8} \times
    e^{-\frac{C_{6}}{P}} \times \frac{1 + \sqrt{P}}{P}  \times 2^{-\frac{\min\left\{
    R_{o, {\rm VLE}, 1}, R_{o, {\rm VLE}, 2}
    \right\}}{2}},
\end{align}
where $C_{6}$ and $C_{8}$ are positive constants independent of $P$
and $R_{o, {\rm VLE}, i}$.
\end{theorem}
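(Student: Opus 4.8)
The plan is to mirror the proof of the rate-loss theorem, i.e.\ \eqref{rate_VLE}: combine the granularity-to-feedback-rate relation with the optimized outage bound and conclude by a single substitution. All the genuinely analytic work has already been done — in the outage-loss lemma (proved in Appendix B) and in the feedback-rate discussion preceding this theorem — so what remains is to chain inequalities and absorb constants.

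First I would invoke the preceding corollary, specifically the bound \eqref{outBound_Linear}, which under the optimized choice $T = \frac{\lambda_1}{2\Delta}\log\frac{1}{\Delta}$ and $0 < \Delta \leq 1$ collapses the granularity dependence into the single factor $\Delta^{1/2}$ while preserving the transmit-power envelope $e^{-C_6/P}(1+\sqrt{P})/P$. This reduces the theorem to bounding $\Delta^{1/2}$ by an exponentially decaying function of the feedback rates.

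Next I would transport the feedback-rate analysis from the rate-adaptation quantizer. Because $q_o(\cdot)$ and $q_r(\cdot)$ share both the bin geometry and the VLE codebook — differing only in whether the left or right boundary of each bin is taken as the reconstruction point, which leaves the per-index codeword lengths and the bin probabilities unchanged — the chain culminating in \eqref{fr_bound} and \eqref{deltaBoundVle} applies verbatim. Hence $\Delta \leq C_4 \times 2^{-R_{o,{\rm VLE},i}}$ for each $i$ once $R_{o,{\rm VLE},i}$ is sufficiently large; taking square roots and using $R_{o,{\rm VLE},i} \geq \min\{R_{o,{\rm VLE},1}, R_{o,{\rm VLE},2}\}$ yields $\Delta^{1/2} \leq C_7 \times 2^{-\min\{R_{o,{\rm VLE},1}, R_{o,{\rm VLE},2}\}/2}$.

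Finally I would substitute this bound on $\Delta^{1/2}$ into \eqref{outBound_Linear} and set $C_8 = C_5 C_7$, giving the claimed estimate. There is no hard step as such; the substance of the result lives in Appendix B, where the outage loss is controlled by $\Delta^{1/2}$, and in the feedback-rate bound. The only point deserving care is the min-over-receivers reduction: both receivers feed back the same common $\Delta$ but at generally different rates, so the per-receiver bound $\Delta \leq C_4 \times 2^{-R_{o,{\rm VLE},i}}$ must be relaxed to the smaller feedback rate in order to obtain a statement that holds uniformly in both receivers.
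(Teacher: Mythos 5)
Your proposal is correct and takes essentially the same route as the paper: the paper likewise transfers the VLE bounds \eqref{fr_bound} and \eqref{deltaBoundVle} from $q_r(\cdot)$ to $q_o(\cdot)$, takes square roots to obtain $\Delta^{\frac{1}{2}} \leq C_7 \times 2^{-\frac{\min\left\{R_{o, {\rm VLE}, 1}, R_{o, {\rm VLE}, 2}\right\}}{2}}$, and substitutes this into the optimized outage bound \eqref{outBound_Linear}. Your added justification that the bin probabilities and codeword lengths are unchanged between $q_r(\cdot)$ and $q_o(\cdot)$ is a point the paper asserts implicitly, but there is no substantive difference in approach.
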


\subsection{Diversity Order}

With an outage probability $\mathtt{out}$, the achieved diversity order is given as $d =
\lim_{P\rightarrow\infty}\frac{\log \mathtt{out}}{\log
P}$\cite{spaceTimeCodingBook}. The
following lemma shows the achievable diversity order of $q_{o}(\cdot)$ and a sufficient
condition to achieve the maximum diversity order in the full-CSI scenario.

\begin{lemma}
  \begin{enumerate}
    \item[(1)] With $q_o(\cdot)$ and fixed $\Delta$, the diversity orders of $\frac{1}{2}$
        and 1 are achievable for Receivers 1 and 2, respectively.
    \item[(2)] A sufficient condition for both receivers to achieve the maximum diversity
        order of 1 is $\Delta \sim_P P^{-\frac{1}{3}}$.
  \end{enumerate}
\end{lemma}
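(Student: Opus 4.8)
The plan is to analyze the per-receiver outage probabilities $\mathtt{out}_{q_o, i} = \Pr\{r_i(\alpha_{q_o}) < r_{\rm th}\}$ directly in the high-SNR regime, locate the dominant outage event for each receiver, and read off its $P$-exponent; note that the outage-loss bound of the preceding Lemma is too loose to deliver these exponents (for $\Delta \sim_P P^{-1/3}$ it only gives $P^{-2/3}$), so a self-contained asymptotic computation is needed. First I would pin down the large-$P$ form of the power allocation: in the branch where receiver $j$ is declared strong, i.e. $q_o(H_j) \geq q_o(H_{j'})$, one has $\alpha_{q_o} \sim_P (P\, q_o(H_j))^{-1/2}$, so writing $2^{\tilde r}-1 = \alpha_{q_o} P\, q_o(H_j) \sim_P \sqrt{P\, q_o(H_j)}$ the strong receiver's post-SIC rate is $r_j(\alpha_{q_o}) = \log_2\!\big(1 + (2^{\tilde r}-1)\,H_j/q_o(H_j)\big)$, whose outage condition reduces to $H_j/\sqrt{q_o(H_j)} \lesssim 2^{r_{\rm th}}/\sqrt P$, while the weak receiver's rate collapses to $\log_2(1 + P H_{j'})$ for small $H_{j'}$, giving the ordinary condition $H_{j'} \lesssim (2^{r_{\rm th}}-1)/P$.

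For Part (1) the key is an asymmetry created by the quantizer floor together with the $\geq$ tie rule. Since $q_o(x) \geq \Delta$ for every $x>0$ and ties between $q_o(H_1)$ and $q_o(H_2)$ go to Receiver~1, Receiver~2 can be declared strong only when $H_2 > \Delta$; there the overestimation ratio obeys $q_o(H_2)/H_2 \leq 2$, hence $H_2/\sqrt{q_o(H_2)} \geq \sqrt{\Delta/2}$ stays bounded away from $2^{r_{\rm th}}/\sqrt P$ and causes no outage. Thus Receiver~2's only outage mode is the weak-receiver event $H_2 \lesssim (2^{r_{\rm th}}-1)/P$, of probability $\sim_P P^{-1}$, giving diversity $1$. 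Receiver~1, by contrast, can be declared strong while $H_1$ is in the first bin, which forces $H_2 \leq \Delta$ (a tie with $q_o(H_1)=q_o(H_2)=\Delta$) and turns the outage condition into $H_1 \lesssim 2^{r_{\rm th}}\sqrt{\Delta/P}$; the event $\{H_1 \lesssim 2^{r_{\rm th}}\sqrt{\Delta/P}\}\cap\{H_2\leq\Delta\}$ has probability $\sim_P \Delta^{3/2} P^{-1/2}$, i.e. $\sim_P P^{-1/2}$ for fixed $\Delta$, giving diversity $\tfrac12$. I would then verify that every remaining contribution decays at least like $P^{-1}$: higher bins $n\geq 2$ (where $H_1/\sqrt{q_o(H_1)} \geq \sqrt{\Delta/2}$ forbids outage), the weak-receiver branch for $s_1$, the capped tail $H_i > T\Delta$, and the SIC-misordering events arising when both channels fall in one bin.

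For Part (2) I would substitute $\Delta \sim_P P^{-1/3}$ into the dominant Receiver-1 probability, obtaining $\Delta^{3/2} P^{-1/2} \sim_P P^{-1/2}\cdot P^{-1/2} = P^{-1}$ after checking that $2^{r_{\rm th}}\sqrt{\Delta/P} \ll \Delta$ so the event genuinely stays in the first bin; this lifts Receiver~1 to diversity $1$ while Receiver~2's $P^{-1}$ mode and all sub-dominant events remain of order at most $P^{-1}$, so both receivers attain the full-CSI diversity order $1$. The main obstacle I anticipate is the bookkeeping of the sub-dominant events in Part~(1): because the pair $(H_1,H_2)$ enters jointly through both the ordering decision and $\alpha_{q_o}$, I must argue uniformly over the coupled region that the first-bin overestimation event is the unique dominant term, and in particular make the asymmetry rigorous — that the minimal attainable value $\Delta$ of $q_o$ combined with the $\geq$ tie rule is exactly what spares Receiver~2 the diversity-$\tfrac12$ penalty while exposing Receiver~1 to it.
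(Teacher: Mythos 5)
Your proposal is correct and is essentially the paper's own argument (Appendix C): both analyses work per receiver over the quantization bins, identify the first-bin tie region ($H_1,H_2\le\Delta$, where $\alpha_{q_o}=\frac{1}{\sqrt{P\Delta+1}+1}$ and Receiver 1's outage means $H_1\lesssim\beta\sqrt{\Delta/P}$) as the unique $\Delta^{3/2}P^{-1/2}$ bottleneck, bound every other contribution by $O(P^{-1})$, and then choose $\Delta\sim_P P^{-1/3}$ so that $\Delta^{3/2}/\sqrt{P}\sim_P P^{-1}$. Your two supporting observations also match the paper's internals: the paper likewise bypasses the (too loose) aggregated bound of Lemma 3 by reusing the refined per-region bounds $\mathscr{F}_1,\dots,\mathscr{F}_5$ from Appendix B, and your tie-rule/quantizer-floor asymmetry is precisely why, in the paper's decomposition, Receiver 2's outage indicator in every region reduces to an event of probability $O(P^{-1})$ (cf. the footnote sharpening $\alpha_{q_o}\le\frac{1}{\sqrt{PH_2+1}+1}$ used for $\mathscr{F}_{2,2}$), whereas Receiver 1 is exposed to the first-bin event.
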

\begin{proof}
  See Appendix C.
\end{proof}

In the full-CSI case, both receivers can achieve the same diversity order of $1$ as in the
case when no interference exists. In the limited feedback case, it can be found from the
proofs in Appendices B and C that the cause of this insufficient diversity order for Receiver
$1$ comes from the marginal region when $0 < H_1, H_2 \leq \Delta$. Therefore, an
adequately small $\Delta$ that scales at least in proportion to $P^{-\frac{1}{3}}$ in the
high-$P$ region is desired to diminish the probability that $H_i$ falls into that region so as
to obtain the maximum diversity gain.

\section{Extension to More than Two Receivers}

\subsection{Full-CSI Performance}

In this section, we consider NOMA with more than two downlink receivers.
Assuming perfect CSI universally available and $H_1 \geq H_2 \geq \cdots \geq H_K$,
the maximum minimum rate can be obtained by solving the optimization problem:
\begin{align}
\label{kuser}
  r_{\max} =& \max\limits_{{\pmb \alpha} = \left[\alpha_1, \ldots,\alpha_K\right]}
\min\limits_{k = 1, \ldots, K} r_k ({\pmb \alpha}), \text{ subject to } 0 \leq \alpha_k\leq 1, \sum_{k=1}^K \alpha_k = 1,
\end{align}
where $K$ is the number of receivers, and $r_k({\pmb \alpha}) = \log_2\left(1 +
\frac{\alpha_k}{\sum_{i=1}^{k-1} \alpha_i + \frac{1}{PH_k}}\right)$ is the achieved rate
for Receiver $k$ under superposition coding and SIC. To the best of our knowledge, no
closed-form solution for $r_{\max}$ is available in the literature. We present the following lemma that helps solving the above optimization problem numerically.

 \begin{lemma}
   There exists ${\pmb \alpha}^{\star} = \left[\alpha_1^{\star}, \alpha_2^{\star}, \ldots, \alpha_K^{\star}\right]$, such that all
      receivers achieve the same rate at optimality, i.e., $r_{\max} = r_{1}\left({\pmb
      \alpha}^{\star}\right) = r_{2}\left({\pmb \alpha}^{\star}\right) = \cdots =
      r_{K}\left({\pmb \alpha}^{\star}\right)$.
 \end{lemma}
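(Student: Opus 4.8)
The plan is to reduce the equal-rate condition to a scalar fixed-point equation and then prove optimality by a monotone induction, which sidesteps the coupling that plagues a direct perturbation argument. First I would change variables from the powers $\alpha_k$ to the cumulative powers $S_k = \sum_{i=1}^{k}\alpha_i$ (with $S_0 = 0$ and $S_K = 1$), since then $\sum_{i=1}^{k-1}\alpha_i = S_{k-1}$ and $\alpha_k = S_k - S_{k-1}$, so each rate telescopes into a clean log-ratio,
\[
r_k({\pmb\alpha}) = \log_2\frac{S_k + \frac{1}{PH_k}}{S_{k-1} + \frac{1}{PH_k}}.
\]
In these variables feasibility is simply $0 = S_0 \le S_1 \le \cdots \le S_K = 1$, and the monotonicity I will exploit is transparent: $r_k$ is strictly increasing in $S_k$ and strictly decreasing in $S_{k-1}$.

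Next I would establish existence of an equal-rate allocation. Imposing $r_k = r$ for all $k$ and writing $c = 2^r$, the log-ratio identity gives the linear recursion $S_k = c\,S_{k-1} + (c-1)/(PH_k)$ with $S_0 = 0$, whose solution is $S_k = (c-1)\sum_{j=1}^{k} c^{\,k-j}/(PH_j)$. The single remaining constraint is $S_K = 1$, i.e. $g(c) := (c-1)\sum_{j=1}^{K} c^{\,K-j}/(PH_j) = 1$. I would check that $g$ is continuous and strictly increasing on $[1,\infty)$ with $g(1) = 0$ and $g(c)\to\infty$, so by the intermediate value theorem there is a unique root $c^\star > 1$; this determines a unique $(S_k^\star)$ and hence a unique ${\pmb\alpha}^\star$. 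Feasibility is immediate: because $c^\star > 1$, the recursion forces $S_k^\star > S_{k-1}^\star$, so every $\alpha_k^\star > 0$, while $S_K^\star = 1$ and monotonicity give $S_k^\star \le 1$, so the simplex constraint holds.

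Finally, to show this equal-rate point is the max-min optimizer, I would argue by contradiction: suppose some feasible ${\pmb\beta}$ satisfied $r_k({\pmb\beta}) > r^\star$ for all $k$, where $r^\star = \log_2 c^\star$. Writing its cumulative sums $S_k^\beta$, the inequality $r_k > r^\star$ reads $S_k^\beta > c^\star S_{k-1}^\beta + (c^\star-1)/(PH_k)$, and a straightforward induction (base case $S_1^\beta > S_1^\star$, inductive step using $c^\star > 0$ together with $S_{k-1}^\beta \ge S_{k-1}^\star$) yields $S_k^\beta > S_k^\star$ for every $k$, hence $S_K^\beta > S_K^\star = 1$, contradicting $S_K^\beta = 1$. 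Therefore no feasible allocation pushes every rate strictly above $r^\star$, so $\min_k r_k \le r^\star$ for all allocations, while ${\pmb\alpha}^\star$ attains $\min_k r_k = r^\star$; thus $r_{\max} = r^\star = r_1({\pmb\alpha}^\star) = \cdots = r_K({\pmb\alpha}^\star)$. I expect this optimality step to be the main obstacle: each $\alpha_k$ enters its own rate positively but enters every later receiver's rate as interference, so a naive rebalancing perturbation couples all users and is delicate to control; the cumulative-sum substitution is precisely what decouples the problem and converts the optimality claim into the clean monotone induction above.
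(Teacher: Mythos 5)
Your proof is correct, but it follows a genuinely different route from the paper's. The paper proves Lemma~5 by induction on the number of receivers: it introduces two parametrized problems, (\textbf{P1}) the max--min problem with power budget $\beta$ and (\textbf{P2}) the same problem with the extra constraint $r_1({\pmb\alpha})=\cdots=r_K({\pmb\alpha})$, takes the two-user case (Theorem~1) as the base, and in the inductive step compares $r_{K_1+1}$ under the two optimizers, invoking monotonicity of the constrained optimum in the budget $\beta$ to conclude the two optimal values coincide. You instead work constructively: the cumulative-power substitution $S_k=\sum_{i\le k}\alpha_i$ telescopes each rate into $\log_2\bigl[\bigl(S_k+\tfrac{1}{PH_k}\bigr)/\bigl(S_{k-1}+\tfrac{1}{PH_k}\bigr)\bigr]$, the equal-rate condition becomes a linear recursion solvable in closed form, existence of the equal-rate point follows from the intermediate value theorem applied to $g(c)=(c-1)\sum_{j}c^{K-j}/(PH_j)$, and optimality follows from the monotone-induction contradiction ($r_k({\pmb\beta})>r^\star$ for all $k$ would force $S_K^\beta>1$). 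Notably, your equation $g(c^\star)=1$ is exactly the paper's $\varpi(r_{\max})=1$, which the paper can only derive \emph{after} Lemma~5 is established; your argument proves the lemma and justifies the bisection procedure in one stroke, and additionally yields uniqueness of the equal-rate allocation and strict positivity of every $\alpha_k^\star$. What the paper's approach buys is that it stays at the level of optimal values and avoids explicit computation, but it leans on the $K=2$ result and on properties (e.g., monotonicity of $r^{\dag}_{\max}$ in $\beta$) that are asserted rather than proved; your argument is self-contained and arguably sharper.
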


 The proof of Lemma 5 is given in Appendix D.  Since $r_{\max} = r_{k}\left({\pmb \alpha}^{\star}\right) = \log_2\left(1 +
      \frac{\alpha_k^{\star}}{ \sum_{i=1}^{k-1} \alpha_i^{\star} + \frac{1}{PH_k}}\right)$
      for $k = 1, \ldots, K$, we have $\alpha_k^{\star} =
      \left(2^{r_{\max}}-1\right)\times\left(\sum_{i=1}^{k-1} \alpha_i^{\star} + \frac{1}{P
      H_k}\right)$, which leads to
      \begin{align}
        \alpha_k^{\star} =
      \left(2^{r_{\max}}-1\right)\left[\frac{1}{PH_k}+\left(2^{r_{\max}}-1\right)\sum_{i=1}^{k-1}
      \frac{2^{\left(k-1-i\right)r_{\max}}}{PH_i}\right].
      \end{align}
       To find $\alpha_k^{\star}$, we need to solve for $r_{\max}$ first. Summing both
      sides from $k = 1, \ldots, K$ and after trivial calculations, we obtain
      \begin{align}
        \sum_{k=1}^K
      \alpha_k^{\star} = 1 =\underbrace{\left(2^{r_{\max}}-1\right)\sum_{i=1}^K \frac{2^{\left(K-i\right)r_{\max}}}{PH_i}}
      _{= \varpi\left(r_{\max}\right)}.
      \end{align}
      In other words, $r_{\max}$ satisfies $\varpi\left(r_{\max}\right) = 1$.\footnote{ Note that \cite{excessPower} has solved a different optimization problem, i.e. maximizing the sum rate subject to a minimum rate constraint, which satisfies $\sum_{k=1}^K \alpha_k^{\star} = 1$ but results in different $\alpha_k^{\star}$s.}

      Let $r_{\rm ub} = \log_2\left(1 + \min_{k=1, \ldots, K} P H_k\right) =
          \log_2(1 + PH_K)$. Since $\varpi\left(x\right)$ is an increasing function of $x$ as well as
          $\varpi(0) < 1$ and $\varpi(r_{\rm ub}) \geq 1$, we could use the bisection method to find the root of
$\varpi(x) = 1$ in the interval $\left(0, r_{\rm ub}\right]$. The calculation
          of $\varpi\left(x\right)$ costs $O(K)$, thus, the time complexity of
          finding $r_{\max}$ within an accuracy of $\epsilon$ is $O\left(K\log\frac{1}{\epsilon}\right)$.

\subsection{Limited Feedback}

 Under limited feedback, the previously proposed quantizers $q_r\left(\cdot\right)$ and $q_o\left(\cdot\right)$ in Figs. 2 and 3
 can still be applied here for rate adaptation and outage probability, respectively. The maximum minimum rate can be calculated
using the bisection method by treating $q_r\left(H_k\right)$ or $q_o\left(H_k\right)$ as
$H_k$, and the corresponding power allocation coefficients can be
computed. Although it is non-trivial to derive upper bounds on the losses in rate or outage
 probability for $K > 2$ theoretically, numerical simulations in Section \Rmnum{6} show that the relationships between
  the performance loss and the feedback rate are similar to the case of $K = 2$.

\section{Numerical Simulations and Discussions}

In this section, we perform numerical simulations to validate the effectiveness of our
proposed quantizers for rate adaptation and outage probability. In all subsequent
simulations for $K = 2$ receivers, we assume the channel variances are $\lambda_1 = 1$ and $\lambda_2 =
0.5$. Results for other values of $\lambda_1$ and $\lambda_2$ will exhibit
similar observations. For outage probability, sufficiently large number of channel
realizations are generated to observe at least $10000$ outage events.

\begin{figure}
\centering
\begin{minipage}{.5\textwidth}
  \centering
  \includegraphics[width=.9\linewidth]{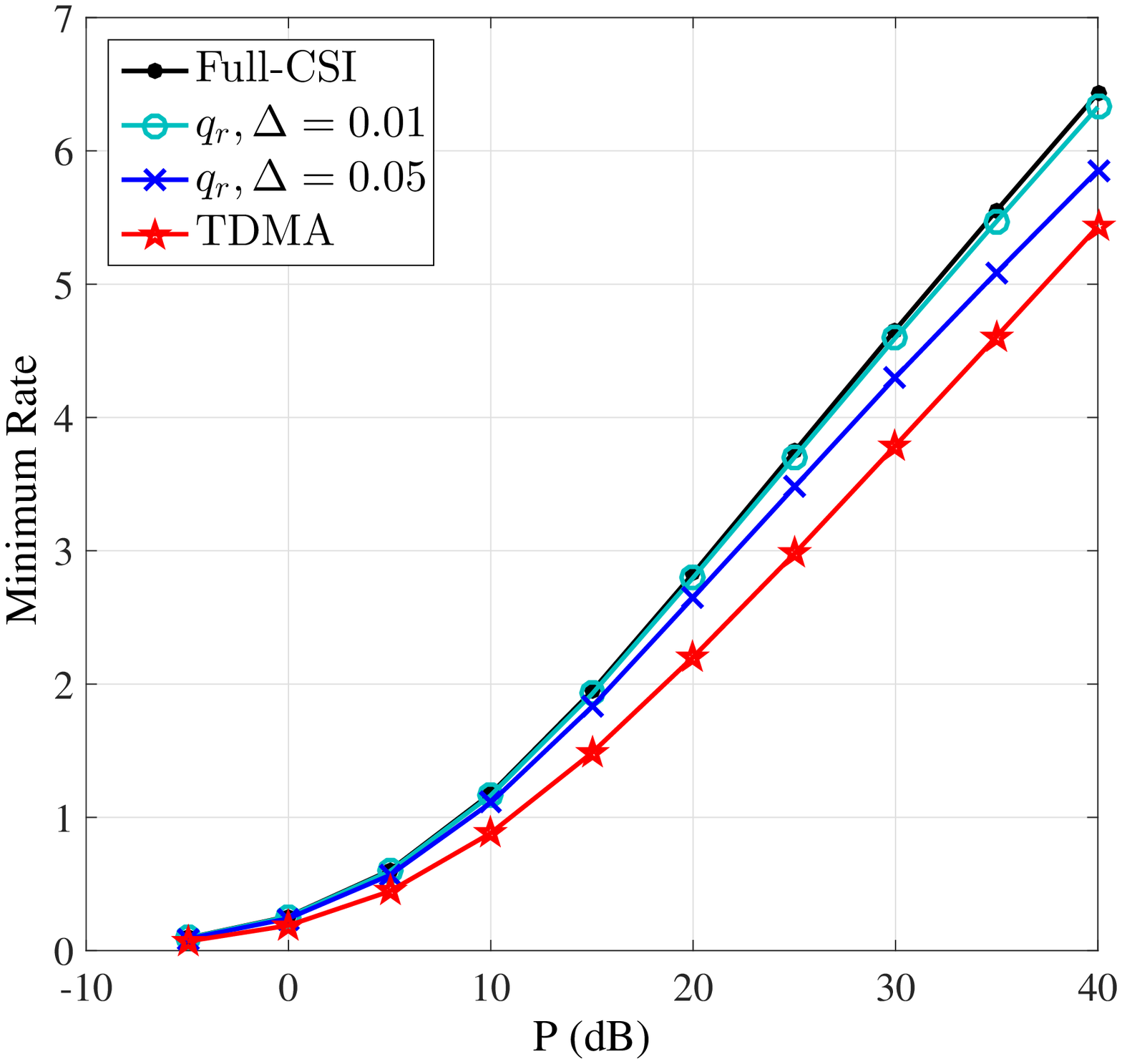}
  \captionof{figure}{Simulated minimum rates of NOMA.}
  \label{fig:test1}
\end{minipage}%
\begin{minipage}{.5\textwidth}
  \centering
  \includegraphics[width=.9\linewidth]{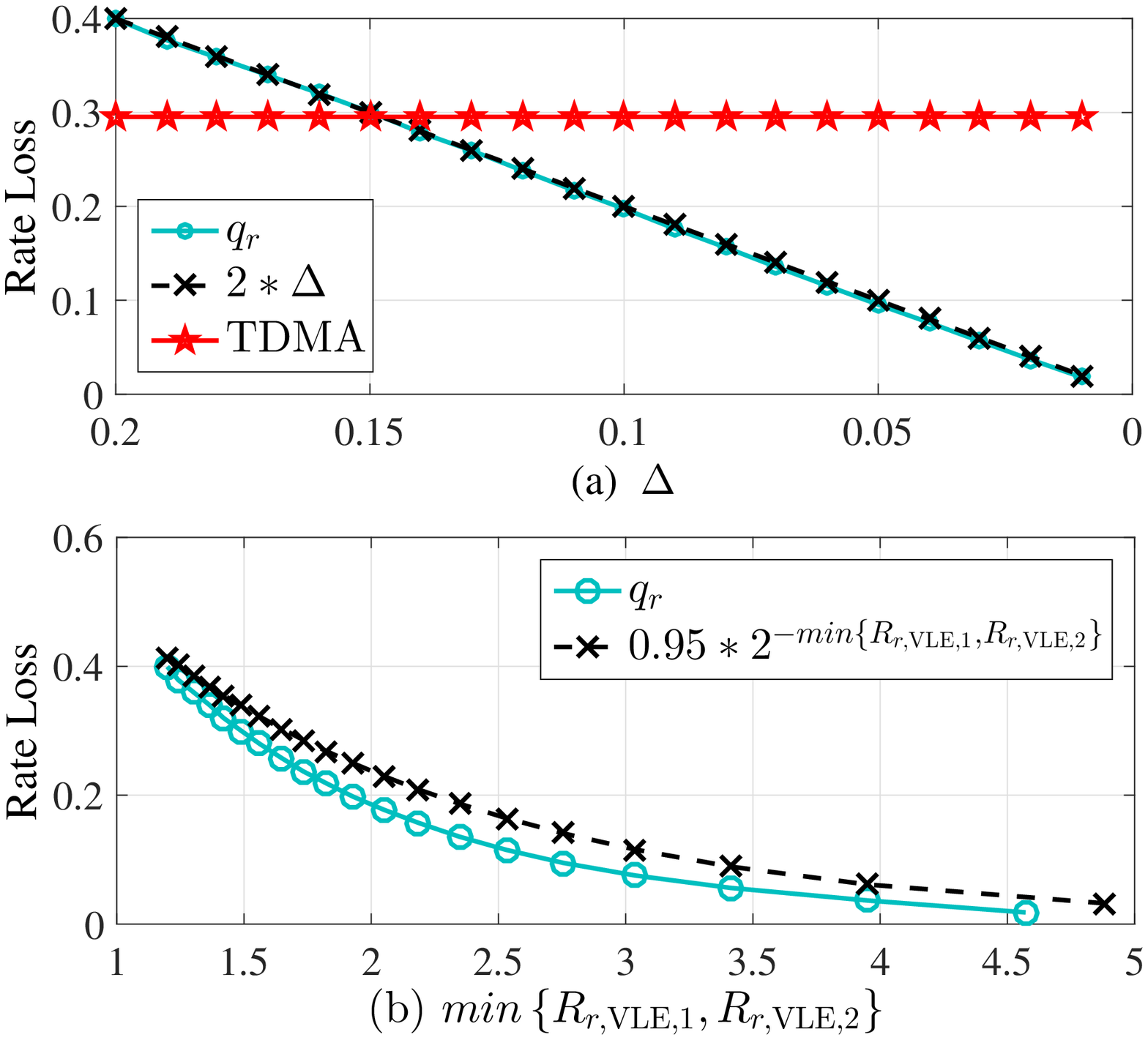}
  \captionof{figure}{Simulated rate losses versus (a) $\Delta$ and (b) $\min\left\{R_{r, {\rm VLE}, 1}, R_{r, {\rm VLE}, 2}\right\}$ for $P = 10$ dB.}
  \label{fig:test2}
\end{minipage}
\end{figure}

In Fig. \ref{fig:test1}, we simulated the minimum rates of the full-CSI case, $q_{r}(\cdot)$
and the TDMA scheme (where each receiver occupies half of the time to transmit). We
observe that the proposed quantizer with NOMA outperforms the TDMA scheme when
$\Delta = 0.01$ and $0.05$. The rate loss between the full-CSI case and $q_{r}(\cdot)$
with $\Delta = 0.01$ is almost negligible. The corresponding values for $T =
\frac{\lambda_1}{\Delta}\log\frac{1}{\Delta}$ and the feedback rates for both receivers
(bits/per channel state) are listed in Table \Rmnum{1}. \vspace*{-5mm}
\begin{table}[H]
\centering
\caption{Feedback rate for either receiver.}
\begin{tabular}{|c|c|c|c|c|}
\hline
 $\Delta$& $T$  & $\lceil\log_2(T+1)\rceil$  & Receiver 1  & Receiver 2  \\ \hline
 0.01&  461&  9 &  5.3 & 4.6  \\ \hline
 0.05&  60&  6 &  3.6 & 2.7 \\ \hline
\end{tabular}
\end{table}
\vspace{-8mm} \noindent Compared with FLE which costs $\lceil\log_2(T+1)\rceil$ bits
per receiver per channel state, VLE can save almost half of the feedback bits.


In Fig. \ref{fig:test2}, we plot the rate losses of $q_r(\cdot)$ for different values of
$\Delta$ and the
 feedback rates $R_{r, {\rm VLE}, 1}$ and $R_{r, {\rm VLE}, 2}$. It shows
that the rate loss of $q_{r}(\cdot)$ decreases at least linearly with respect to $\Delta$ and
exponentially with $\min\left\{R_{r, {\rm VLE}, 1}, R_{r, {\rm VLE}, 2}\right\}$, which
validates the accuracy of our derived upper bounds in \eqref{rateBound_Linear} and
\eqref{rate_VLE}. In addition, Fig. \ref{fig:test2}(a) shows that $\Delta$ needs to be less
than $0.15$ such that $q_r(\cdot)$ can obtain a higher rate compared with the TDMA
scheme.

\begin{figure}
\centering
\begin{minipage}{.5\textwidth}
  \centering
  \includegraphics[width=.9\linewidth]{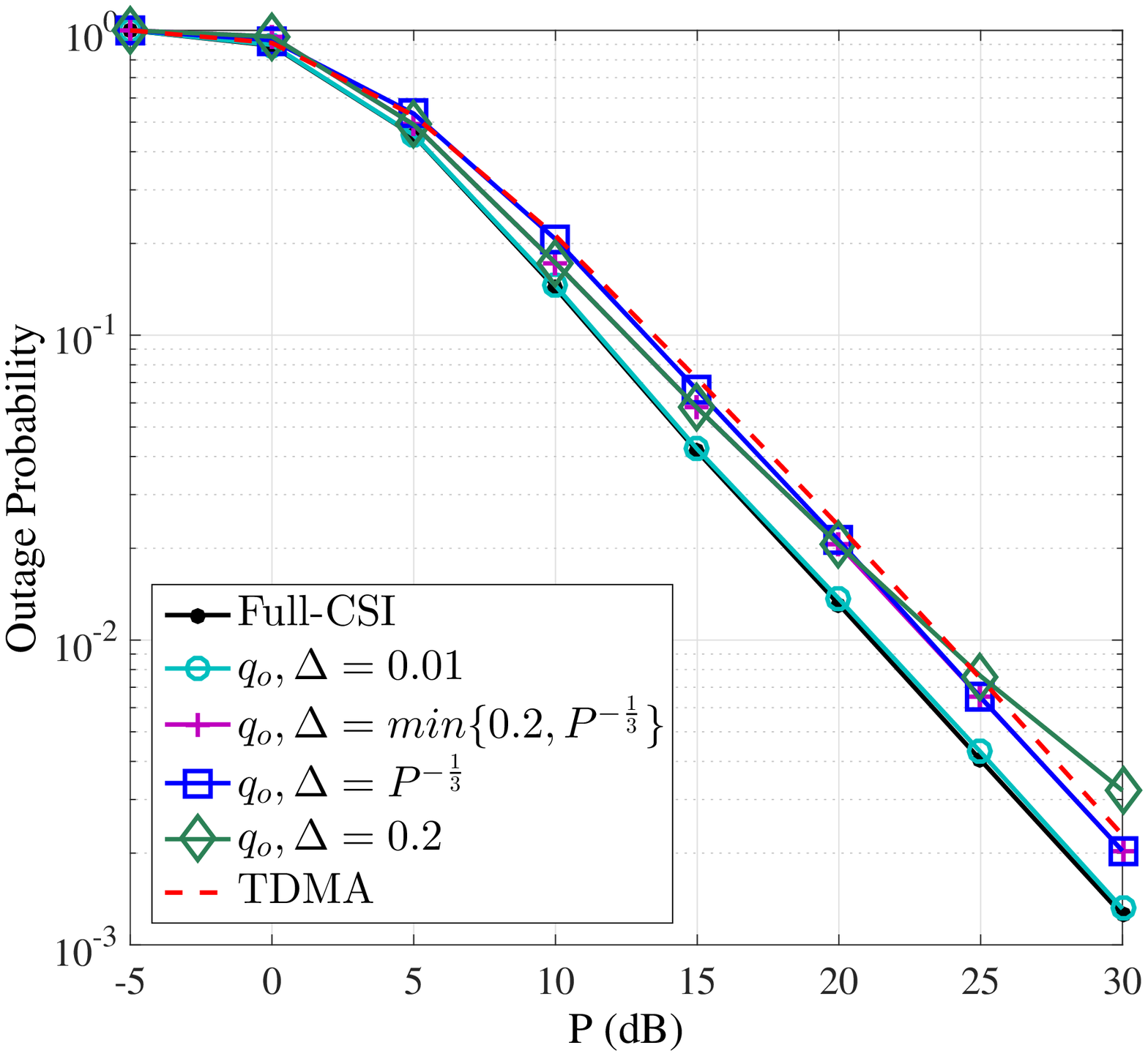}
  \captionof{figure}{Simulated outage probabilities of NOMA.}
  \label{fig:out1}
\end{minipage}%
\begin{minipage}{.5\textwidth}
  \centering
  \includegraphics[width=.9\linewidth]{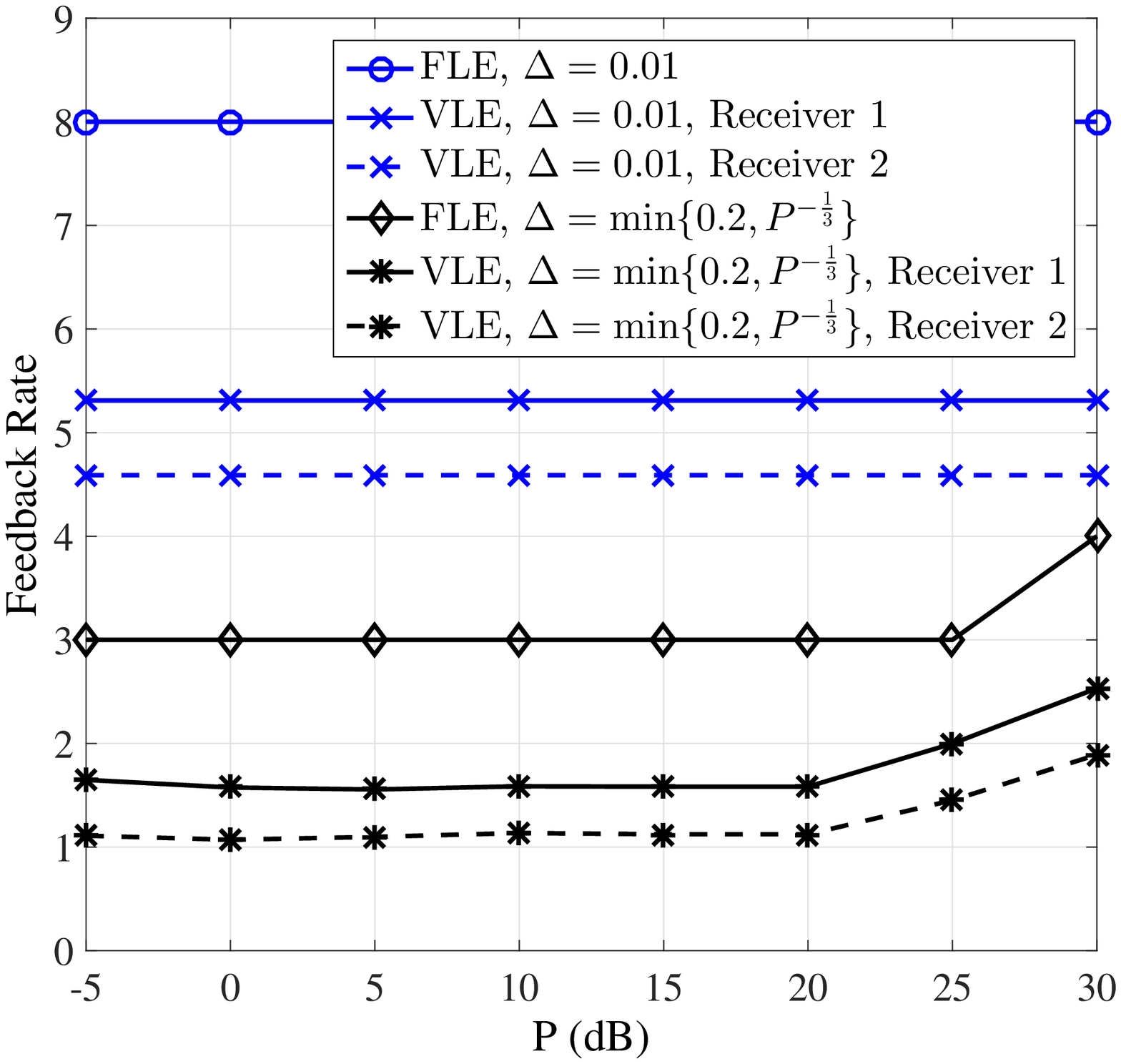}
  \captionof{figure}{Simulated feedback rates versus $P$.}
  \label{fig:out2}
\end{minipage}
\end{figure}

In Fig. \ref{fig:out1}, we compare the outage probabilities of the full-CSI case,
$q_{o}(\cdot)$ under various values of $\Delta$ and the TDMA scheme. It can be seen
that: (\rmnum{1}) The curve for $q_o(\cdot)$ with $\Delta = 0.01$ almost coincides with
that of the full-CSI case. (\rmnum{2}) When $P$ is large, $q_{o}(\cdot)$ with $\Delta =
0.2$ suffers from an insufficient diversity gain in the high-$P$ region. According to our
analysis in Lemma 4, $\Delta = 0.2$ is large enough not to scale with
$P^{-\frac{1}{3}}$.\footnote{The value $0.01$ for $\Delta$ will also exhibit an insufficient
diversity order as long as $P$ is large enough, although we might not be able to observe
this in the region of $P \leq 30$ dB in Fig. \ref{fig:out1}.} (\rmnum{3}) Although the
maximum diversity order is achieved when $\Delta = P^{-\frac{1}{3}}$, much less array
gain is obtained in the lower and medium-$P$ regions (where $\Delta$ is large).
Alternatively, $\Delta = \min\left\{0.2, P^{-\frac{1}{3}}\right\}$ will reserve both benefits
of the maximum diversity order brought by $P^{-\frac{1}{3}}$ and the higher array gain of
$\Delta = 0.2$.\footnote{We also observe a similar effect of $\Delta$ on the achieved
minimum rates, but we mainly elaborate it on outage probability.} The comparison of
feedback rates for VLE and FLE (which requires $\left\lceil\log_2(T+2)\right\rceil =
\left\lceil\log_2\left(\frac{\lambda_1}{2\Delta}\log\frac{1}{\Delta}+2\right)\right\rceil$
bits per channel state) under different values of $\Delta$ and $P$ is shown in Fig.
\ref{fig:out2}, which verifies the superiority of VLE. It can be seen that the feedback rates
for $\Delta = \min\left\{0.2, P^{-\frac{1}{3}}\right\}$ stay flat in the low and
medium-$P$ regions (since $0.2 \leq P^{-\frac{1}{3}}$). When $P^{-\frac{1}{3}} \leq 0.2$
where $P \geq 20.9$ dB, the feedback rates start to increase as $\Delta$ gets smaller.

\begin{figure}
\centering
\begin{minipage}{.5\textwidth}
  \centering
  \includegraphics[width=.9\linewidth]{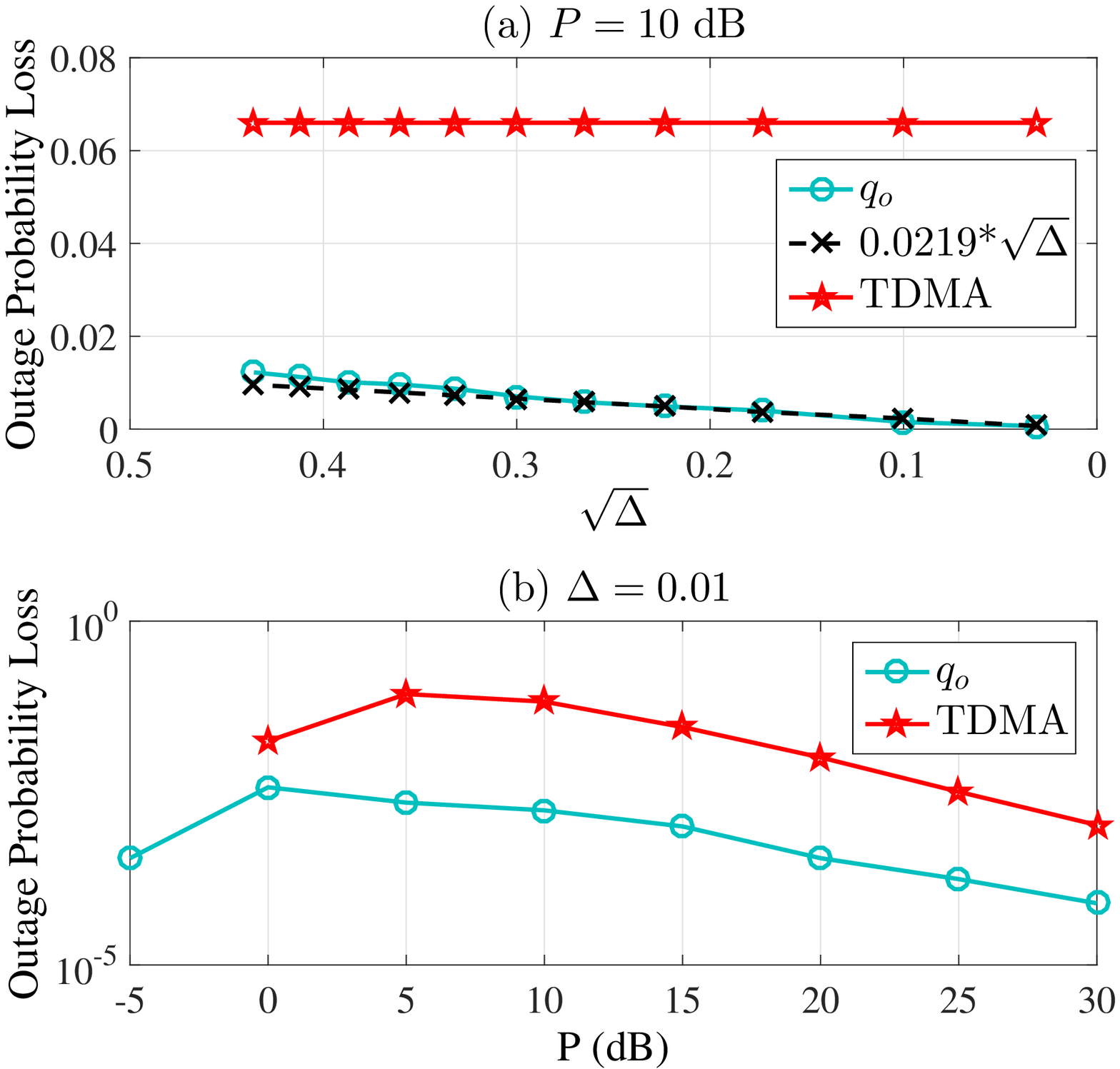}
  \captionof{figure}{Simulated outage probability losses versus $\sqrt{\Delta}$ and $P$.}
  \label{fig:out3}
\end{minipage}%
\begin{minipage}{.5\textwidth}
  \centering
  \includegraphics[width=.9\linewidth]{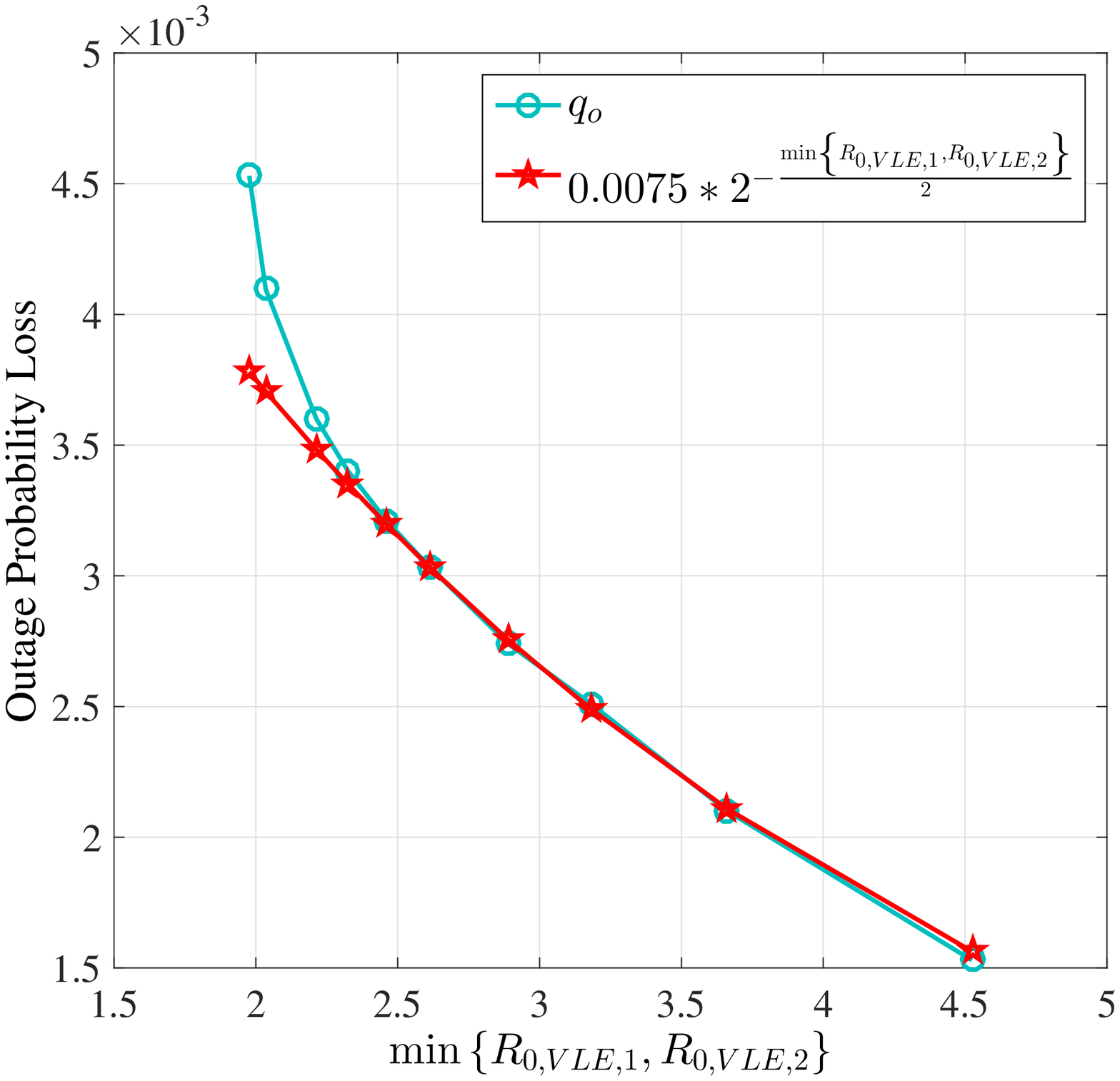}
  \captionof{figure}{Simulated outage probability losses versus
  $\min\left\{R_{0, {\rm VLE}, 1}, R_{0, {\rm VLE}, 2}\right\}$.}
  \label{fig:out4}
\end{minipage}
\end{figure}

In Fig. \ref{fig:out3}(a), the outage probability loss decays at least linearly with respect to
$\Delta$; in Fig. \ref{fig:out3}(b), the outage probability loss approaches zero whenever
$P\rightarrow 0$ or $P\rightarrow \infty$; in Fig. \ref{fig:out4}, the outage probability
loss decays at least exponentially with $\frac{\min\left\{R_{0, {\rm VLE}, 1}, R_{0, {\rm
VLE}, 2}\right\}}{2}$. All these observations validate our theoretical analysis.

\begin{figure}
\centering
\begin{minipage}{.5\textwidth}
  \centering
  \includegraphics[width=.9\linewidth]{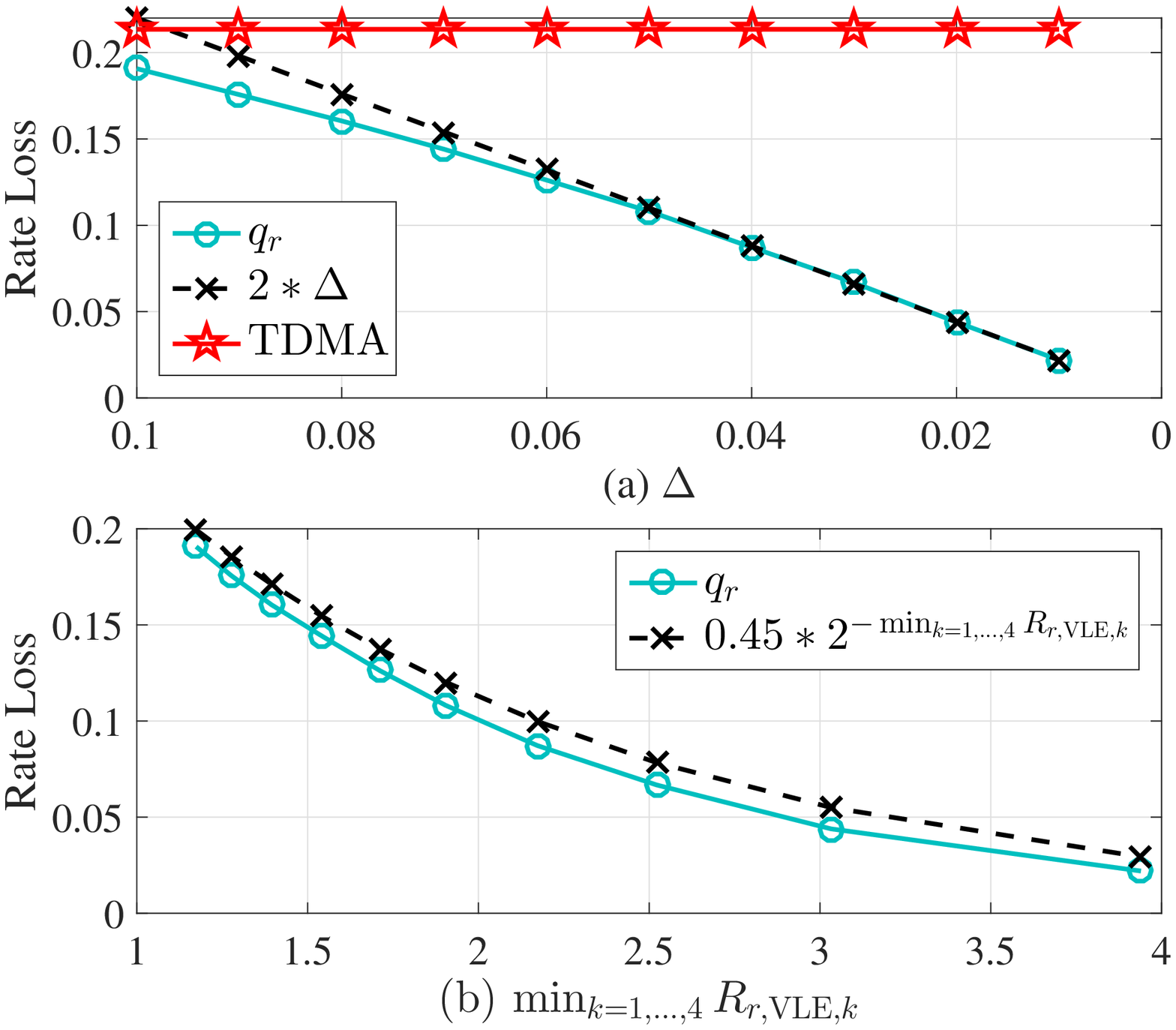}
  \captionof{figure}{Simulated rate losses versus (a) $\Delta$ and (b) $\min_{k = 1, \ldots, K} R_{r, {\rm VLE}, k}$ for $K = 4$ and $P = 10$ dB.}
  \label{fig:K4_1}
\end{minipage}%
\begin{minipage}{.5\textwidth}
  \centering
  \includegraphics[width=.9\linewidth]{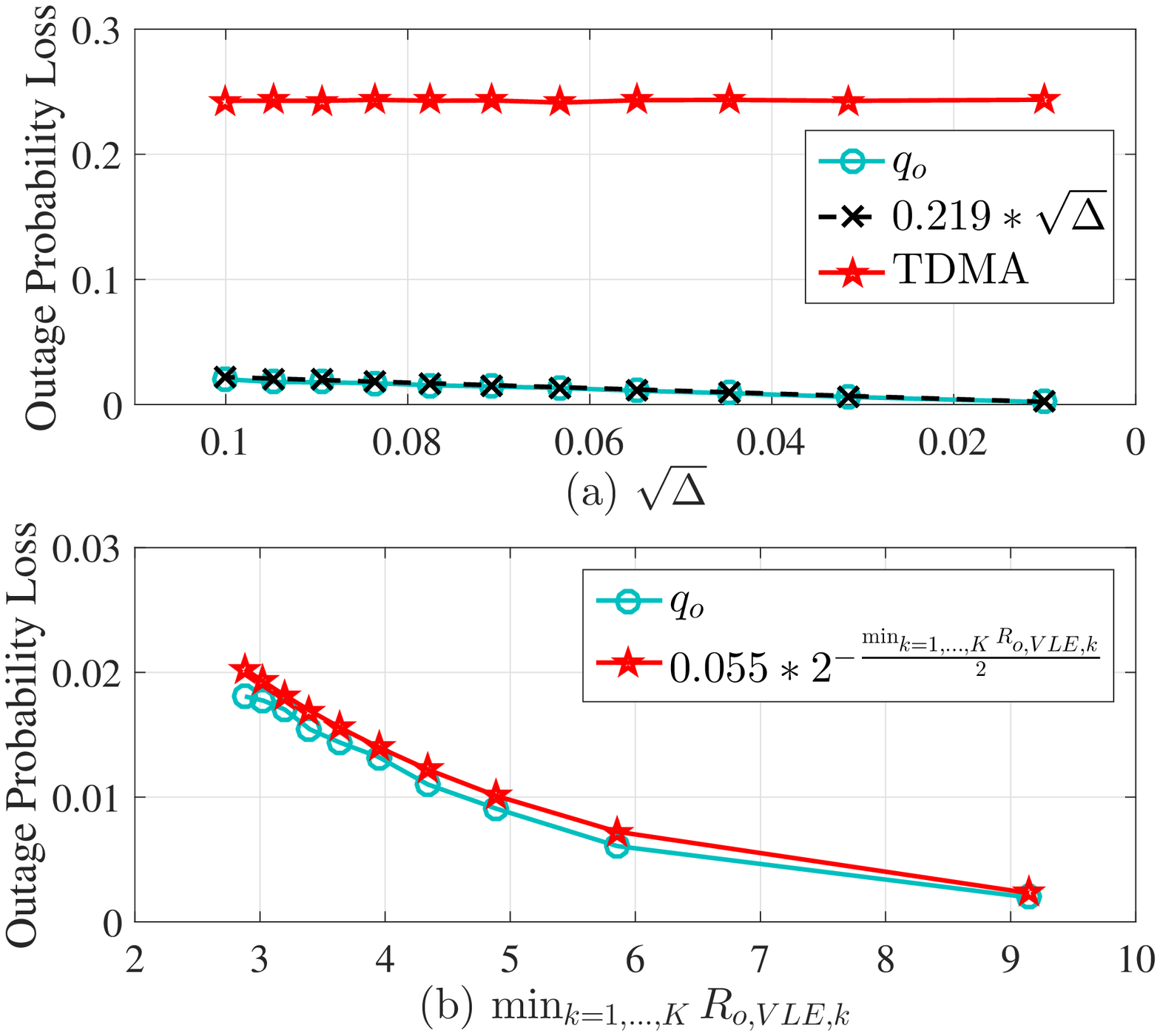}
  \captionof{figure}{Simulated outage probability losses versus (a) $\sqrt{\Delta}$ and (b) $\min_{k = 1, \ldots, K} R_{o, {\rm VLE}, k}$ for $K = 4$ and $P = 10$ dB.}
  \label{fig:K4_2}
\end{minipage}
\end{figure}

In Figs. \ref{fig:K4_1} and \ref{fig:K4_2}, we simulated the rate and outage
probability losses for more than two receivers. For Receiver $k$, the channel variance is set
to be $\lambda_k = \frac{1}{k}$, the maximum number of bins $T$ for $q_r(\cdot)$ and
$q_o(\cdot)$ is $T = \frac{\lambda_k}{\Delta}\log\frac{1}{\Delta}$, and the accuracy
used by the bisection method is $\epsilon = 10^{-4}$. We simply treat the result of bisection method
based on perfect CSI as the ``full-CSI'' performance. Compared with Figs. \ref{fig:test2},
\ref{fig:out3} and \ref{fig:out4} for $K = 2$, Figs. \ref{fig:K4_1} and
\ref{fig:K4_2} exhibit very similar relationships between the losses and $\Delta$ or the
feedback rates.

\section{Conclusions and Future Work}

We have introduced efficient quantizers for rate adaptation and outage probability of
minimum rate in NOMA with two receivers. We have proved that the losses in rate and
outage probability both decrease at least exponentially with the minimum of the feedback rates.
Furthermore, we generalized the proposed quantizers to NOMA with any number of
receivers. The limited feedback design for the MIMO-NOMA networks will be an
interesting future research direction.

\section*{Appendix A: Proof of Lemma 2}

To clarify, the notation $D_i$ for $i \in \mathds{N}$ represents a positive constant
independent of $P, T$ and $\Delta$. The average rate loss of $q_r(\cdot)$ can be
expressed as
\begin{align}
  \mathtt{E}\left[r_{{\rm loss}}\right] = \underbrace{\int_{\mathpzc{H}_{0, \geq}} r_{{\rm loss}} \prod_{i=1}^2 f_{H_i}(H_i)
  {\rm d}
  H_i}_{= \mathtt{E}_{\geq}\left[r_{{\rm loss}}\right]}  +
  \underbrace{\int_{\mathpzc{H}_{0, <}} r_{{\rm loss}} \prod_{i=1}^2 f_{H_i}(H_i)
  {\rm d}
  H_i}_{
  = \mathtt{E}_{<}\left[r_{{\rm loss}}\right]},\nonumber
\end{align}
where $\mathpzc{H}_{0, \geq} = \left\{(H_1, H_2): q_r(H_1) \geq q_r(H_2)\right\}$
and $\mathpzc{H}_{0, <} = \left\{(H_1, H_2): q_r(H_1) < q_r(H_2)\right\}$. We will
only show $\mathtt{E}_{\geq}\left[r_{{\rm loss}}\right] \leq \log_2\left(1 + D_0 \times P
\times \max\left\{e^{-\frac{T\Delta}{\lambda_1}}, \Delta\right\}\right)$, and skip the
proof for $\mathtt{E}_{<}\left[r_{{\rm loss}}\right]$ due to similarity. Note that $q_r(H_1) \geq q_r(H_2)$ does not necessarily mean $H_1 \geq H_2$, since it
is possible that $q_r(H_1) = q_r(H_2)$ and $H_1 < H_2$.  When $q_r(H_1) \geq
q_r(H_2)$, define
\begin{eqnarray}
\label{snr_max}
\begin{array}{l}
{\sf snr}_{\max} =
\left\{
\begin{matrix}
\alpha^{\star} H_1 = g_{\geq}(H_1, H_2), & \text{if } H_1 \geq H_2, \\
\alpha^{\star} H_2 = g_{<}(H_1, H_2), & \text{if } H_1 < H_2,
\end{matrix}
\right. \\
{\sf snr}_{q_r} = \alpha_{q_r} \times q_r(H_1) = g_{\geq}\left(q_r(H_1), q_r(H_2)\right),\\
{\sf snr}_{\rm loss} = {\sf snr}_{\max} - {\sf snr}_{q_r}.
\end{array}
\end{eqnarray}
where $g_{\geq}(x, y) = \frac{2xy}{\sqrt{(x+y)^2 + 4xy^2P}+x+y}$ and $g_{<}(x, y) =
\frac{2xy}{\sqrt{(x+y)^2+4x^2yP} + x + y}$. Then, we have $ r_{\rm loss} = \log_2\left(1 +
P \times {\sf snr}_{\max}\right) - \log_2\left(1 + P\times {\sf snr}_{q_r}\right)
 = \log_2\left(1 + P \frac{{\sf snr}_{\rm loss}}{1 + P\times
{\sf snr}_{q_r}}\right) \leq \log_2\left(1 + P\times {\sf snr}_{\rm loss}\right)$. Grounded on
this, the main steps of the proof are listed as follows:
\begin{enumerate}
  \item[(1)] Partition $\mathpzc{H}_{0, \geq}$ into the following mutually disjoint
      sub-regions $\mathpzc{H}_1, \ldots, \mathpzc{H}_4$:
\begin{eqnarray}
  \begin{array}{l}
    \mathpzc{H}_1 = \left\{(H_1, H_2): q_r(H_1) \geq q_r(H_2), H_1 < T\Delta, H_2 < T\Delta,
    H_1 < \Delta \text{ or } H_2 < \Delta\right\}, \nonumber\\
    \mathpzc{H}_2 = \left\{(H_1, H_2): q_r(H_1) \geq q_r(H_2), H_1 \geq H_2, \Delta \leq H_1 < T\Delta,
    \Delta \leq H_2 < T\Delta\right\} \nonumber\\
    \mathpzc{H}_3 = \left\{(H_1, H_2): q_r(H_1) = q_r(H_2), H_1 < H_2, \Delta \leq H_1 < T\Delta,
    \Delta \leq H_2 < T\Delta\right\} \nonumber\\
    \mathpzc{H}_4 = \left\{(H_1, H_2): q_r(H_1) \geq q_r(H_2), H_1 \geq T\Delta \text{ or } H_2 \geq T\Delta\right\}.
  \end{array}
\end{eqnarray}
Here, $\mathpzc{H}_1$ and $\mathpzc{H}_4$ are edge regions where $H_i < \Delta$
 or $H_i \geq T\Delta$; $\mathpzc{H}_2$ and $\mathpzc{H}_3$ are the dominant
 regions where $\Delta \leq H_i < T\Delta$. It can be verified that $\mathpzc{H}_i \cap
 \mathpzc{H}_j = \emptyset$ for $i \neq j$,
      and $\mathpzc{H}_{0, \geq} = \bigcup_{i=1}^4 \mathpzc{H}_i$.
\item[(2)] Let $\mathscr{E}_i = \int_{\mathpzc{H}_i}  {\sf snr}_{{\rm
    loss}}\prod_{i=1}^2 f_{H_i}(H_i)
  {\rm d} H_i$. Then,
    $\mathtt{E}_{\geq}\left[{\sf snr}_{\rm loss}\right] = \sum_{i = 1}^4 \mathscr{E}_i$.
    Prove $\mathscr{E}_i \leq D_i \times \max\left\{e^{-\frac{T\Delta}{\lambda_1}},
    \Delta\right\}$ for $i = 1,
     \ldots, 4$.
\item[(3)] After Steps (1) and (2), we obtain $\mathtt{E}_{\geq}\left[{\sf snr}_{\rm
    loss}\right] \leq D_0 \times \max\left\{e^{-\frac{T\Delta}{\lambda_1}},
    \Delta\right\}$. Based on Jensen's inequality, we have
\begin{align}
  \mathtt{E}_{\geq}\left[r_{\rm loss}\right]
  & \leq \mathtt{E}_{\geq} \left[\log_2\left(1 +
P\times{\sf snr}_{\rm loss}\right)\right] \leq \log_2\left(1 + P \times \mathtt{E}_{\geq}\left[{\sf snr}_{\rm
    loss}\right] \right)  \leq \log_2\left(1 + D_0 \times P \times
\max\left\{e^{-\frac{T\Delta}{\lambda_1}}, \Delta\right\}\right).\nonumber
\end{align}
\end{enumerate}
Now, we only need to show the upper bound on $\mathscr{E}_i$ in Step (2).

For $\mathscr{E}_1$, since $\mathpzc{H}_1 \subseteq \left\{(H_1, H_2): H_2 \leq \Delta
\right\}$ and ${\sf snr}_{{\rm loss}} \leq {\sf snr}_{{\max}} \leq H_1$, we obtain
\begin{align}
\mathscr{E}_1 \leq \int_0^{\infty} H_1\frac{e^{-\frac{H_1}{\lambda_1}}}{\lambda_1} {\rm d}H_1
\int_0^{\Delta} \frac{e^{-\frac{H_2}{\lambda_2}}}{\lambda_2} {\rm d}H_2
  = \lambda_1  \left(1 - e^{-\frac{\Delta}{\lambda_2}}\right) \leq \lambda_1
  \times \frac{\Delta}{\lambda_2} = {D_1 \times \Delta}, \nonumber
  \end{align}
where the last inequality follows since $1-e^{-x} \leq x$ for $x \geq 0$.

For $\mathscr{E}_2$, since $H_1 \geq H_2$ and
  $q_r\left({H_i}\right) \leq H_i \leq q_r\left({H_i}\right) + \Delta$ for $H_i \leq T\Delta$, we upper-bound ${\sf snr}_{\rm loss}$ by
  \begin{align}
  \label{snr_loss_upperbound}
   {\sf snr}_{\rm loss}
    & = \frac{2H_1 H_2}{\underbrace{\sqrt{\left(H_1 + H_2\right)^2 + 4 H_1 H_2^2 P}}_{=\Upsilon} + \left(H_1 + H_2\right)} \nonumber\\
    & - \frac{2 q_r\left(H_1\right)q_r\left(H_2\right)}{\underbrace{\sqrt{\left[q_r\left(H_1\right) + q_r\left(H_2\right)\right]^2 + 4 q_r\left(H_1\right)
    q_r^2\left(H_2\right) P} + \left[q_r\left(H_1\right) + q_r\left(H_2\right)\right]}_{ \leq {\Upsilon} + H_1 + H_2}} \nonumber\\
    & \leq 2\frac{H_1 H_2 - q_r\left(H_1\right)q_r\left(H_2\right)}{\Upsilon + H_1 + H_2}
    \leq 2\frac{H_1 H_2 - \left(H_1 - \Delta\right)\left(H_2 - \Delta\right)}{\Upsilon + H_1 + H_2}
   = 2\Delta\frac{H_1 + H_2 - \Delta}{\Upsilon + H_1 + H_2}
    \leq 2 \Delta.
  \end{align}
  Then, an upper bound on $\mathscr{E}_2$ can be $\mathscr{E}_2
     \leq 2 \Delta \int_{\mathcal{H}_2} \prod_{i=1}^2 f_{H_i}(H_i)
  {\rm d}
  H_i
     \leq 2 \Delta = D_2 \times \Delta$.

  For $\mathscr{E}_3$, we have $q_r(H_1) = q_r(H_2) \leq H_1 < H_2$ and $q_r\left({H_i}\right)
  \leq H_i \leq q_r\left({H_i}\right) + \Delta$ hold for $(H_1, H_2) \in \mathpzc{H}_3$.
  Similar to \eqref{snr_loss_upperbound}, we can also obtain ${\sf snr}_{\rm loss} \leq 2\Delta$ and $\mathscr{E}_3 \leq D_3 \times \Delta$.

   For $\mathscr{E}_4$, since $\mathpzc{H}_4 \subseteq \left\{\left(H_1, H_2\right): H_1 > T\Delta\right\}$ and
    ${\sf snr}_{\rm loss} \leq {\sf snr}_{\max} \leq H_2$,
   the upper-bound on $\mathscr{E}_4$ can be
   \begin{align}
     \mathscr{E}_4
     & \leq  \int_{T\Delta}^{\infty} f_{H_1}(H_1) {\rm d}H_1 \int_{0}^{\infty}H_2 f_{H_2}(H_2)  {\rm d}H_2
      = \int_{T\Delta}^{\infty} \frac{e^{-\frac{H_1}{\lambda_1}}}{\lambda_1} {\rm d}H_1
     \int_0^{\infty} H_2 \frac{e^{-\frac{H_2}{\lambda_2}}}{\lambda_2}  {\rm d}H_2 = \lambda_2 e^{-\frac{T\Delta}{\lambda_1}}
      = D_4 \times e^{-\frac{T\Delta}{\lambda_1}}. \nonumber
   \end{align}
   We have accomplished Step (2) and the proof of \eqref{rateBound} is complete.
\hfill$\blacksquare$

\section*{Appendix B: Proof of Lemma 3}

When the uniform quantizer $q_o(\cdot)$ is applied, the outage probability loss in
\eqref{loss} is rewritten as
\begin{align}
  \mathtt{out}_{{\rm loss}, q_{o}}
  & = \underbrace{\int_{\mathpzc{I}_{0, \geq}}
   {\pmb 1}_{\min\left\{r_1\left(\alpha_{q_o}\right), r_2\left(\alpha_{q_o}\right)\right\}
    < r_{\rm th}}
   \prod_{i=1}^2 f_{H_i}(H_i)
  {\rm d}
  H_i}_{= \mathtt{out}_{\geq, {\rm loss}, q_{o}}} + \underbrace{\int_{\mathpzc{I}_{0, <}}
   {\pmb 1}_{\min\left\{r_1\left(\alpha_{q_o}\right), r_2\left(\alpha_{q_o}\right)\right\} < r_{\rm th}}
   \prod_{i=1}^2 f_{H_i}(H_i)
  {\rm d}
  H_i}_{= \mathtt{out}_{<, {\rm loss}, q_{o}}}.\nonumber
\end{align}
where
\begin{eqnarray}
  \begin{array}{l}
    \mathpzc{I}_{0, \geq} = \left\{(H_1, H_2):    q_r(H_1) \geq q_r(H_2),
r_{\max}
= \log_2\left(1 + P\times {\sf snr}_{\max} \right)\geq r_{\rm
th}\right\} \nonumber\\
\quad \quad =
\left\{(H_1, H_2):        q_r(H_1) \geq q_r(H_2),
{\sf snr}_{\max} \geq \frac{\beta}{P} = \frac{2^{r_{\rm th}}-1}{P}\right\}, \nonumber\\
    \mathpzc{I}_{0, <} = \left\{(H_1, H_2):   q_r(H_1) < q_r(H_2),
{\sf snr}_{\max} < \frac{\beta}{P}\right\}. \nonumber
  \end{array}
\end{eqnarray}
and ${\sf snr}_{\max}$ is defined in \eqref{snr_max}. We show $\mathtt{out}_{\geq,
{\rm loss}, q_{o}} \leq D_5 \times e^{-\frac{D_{6}}{P}} \times \frac{1 + \sqrt{P}}{P} \times
\max\left\{e^{-\frac{T\Delta}{\lambda_1}}, {\Delta}^{\frac{1}{2}}, \Delta^{\frac{3}{2}}\right\}$ and skip the
proof for $\mathtt{out}_{<, {\rm loss}, q_{o}}$ due to similarity. The
main steps of the proof are:
\begin{enumerate}
  \item[(1)] Partition $\mathpzc{I}_{0, \geq}$ into the following mutually disjoint
      sub-regions:
      \begin{eqnarray}
        \begin{array}{l}
           \mathpzc{I}_1  = \left\{(H_1, H_2): q_r(H_1) \geq q_r(H_2), {\sf snr}_{\max} \geq \frac{\beta}{P},
  H_1 \leq \Delta, H_2 \leq \Delta\right\},\nonumber\\
  \mathpzc{I}_2  = \left\{(H_1, H_2): q_r(H_1) \geq q_r(H_2), {\sf  snr}_{\max} = g_{\geq}(H_1, H_2) \geq \frac{\beta}{P},
  \Delta < H_1 \leq T\Delta, H_2 \leq \Delta\right\},\nonumber\\
  \mathpzc{I}_3 = \left\{(H_1, H_2): q_r(H_1) \geq q_r(H_2), H_1 \geq H_2, g_{\geq}(H_1, H_2) \geq \frac{\beta}{P},
  \Delta < H_1 \leq T\Delta, \Delta < H_2 \leq T\Delta \right\},\nonumber\\
  \mathpzc{I}_4 = \left\{(H_1, H_2): q_r(H_1) = q_r(H_2), H_1 < H_2, g_{<}(H_1, H_2) \geq \frac{\beta}{P},
  \Delta < H_1 \leq T\Delta, \Delta < H_2 \leq T\Delta \right\},\nonumber\\
  \mathpzc{I}_5 = \left\{(H_1, H_2): q_r(H_1) \geq q_r(H_2), {\sf snr}_{\max} \geq \frac{\beta}{P},
  H_1 > T\Delta  \text{ or } H_2 > T\Delta \right\}.\nonumber
  \end{array}
      \end{eqnarray}
      Here, $\mathpzc{I}_1$, $\mathpzc{I}_2$ and $\mathpzc{I}_5$ are the marginal
      regions where $H_i \leq \Delta$ or $H_i > T\Delta$; $\mathpzc{I}_3$ and
      $\mathpzc{I}_4$ are the main regions where $\Delta < H_i \leq T\Delta$. It can be verified that $\mathpzc{I}_i \cap
 \mathpzc{I}_j = \emptyset$ for $i \neq j$,
      and $\mathpzc{I}_{0, \geq} = \bigcup_{i=1}^5 \mathpzc{I}_i$.
      \item[(2)] Let $\mathscr{F}_i = \int_{\mathpzc{I}_i} {\pmb
          1}_{\min\left\{r_1\left(\alpha_{q_o}\right), r_2\left(\alpha_{q_o}\right)\right\} <
          r_{\rm th}} \prod_{i=1}^2 f_{H_i}(H_i)
  {\rm d} H_i$. Then,
          $\mathtt{out}_{\geq, {\rm loss}, q_{o}} = \sum_{i=1}^5 \mathscr{F}_i$. Prove
          $\mathscr{F}_i \leq D_{2i+5} \times e^{-\frac{D_{{2i+6}}}{P}} \times \frac{1 +
          \sqrt{P}}{P} \times \max\left\{e^{-\frac{T\Delta}{\lambda_1}}, {\Delta}^{\frac{1}{2}}, \Delta^{\frac{3}{2}}\right\}$ for $i = 1, \ldots, 5$.
\end{enumerate}

Now, we need to show the upper bound on $\mathscr{F}_i$ in Step (2).

For $\mathscr{F}_1$, we have $q_o(H_1) = q_o(H_2) = \Delta \geq H_2$, and thus,
$\alpha_{q_o} = \frac{1}{\sqrt{P\Delta+1}+1} \leq \frac{1}{\sqrt{PH_2+1}+1}$. For any
$(H_1, H_2) \in \mathpzc{I}_1$, since $g_{\geq}(x, y) \leq \min\{x, y\}$ and $g_{<}(x, y)
\leq \min\{x, y\}$, it must have $\frac{\beta}{P} \leq {\sf snr}_{\max} \leq \min\{H_1,
H_2\}$. Moreover, we obtain ${\pmb 1}_{\min\left\{r_1\left(\alpha_{q_o}\right),
r_2\left(\alpha_{q_o}\right)\right\}
  < r_{\rm th}}
  \leq {\pmb 1}_{r_1\left(\alpha_{q_o}\right) < r_{\rm th}} + {\pmb
1}_{r_2\left(\alpha_{q_o}\right) < r_{\rm th}}$, and
\begin{eqnarray}
  \begin{array}{l}
    {\pmb
1}_{r_1\left(\alpha_{q_o}\right) < r_{\rm th}} = {\pmb 1}_{H_1 \times \alpha_{q_o} <
\frac{\beta}{P}} = {\pmb 1}_{H_1 < \beta\frac{\sqrt{P\Delta+1}+1}{P}}, \nonumber\\
{\pmb
1}_{r_2\left(\alpha_{q_o}\right) < r_{\rm th}}
=
{\pmb
1}_{\frac{H_2\left(1-\alpha_{q_o}\right)}{PH_2\alpha_{q_o}+1} < \frac{\beta}{P}}
\leq
{\pmb
1}_{\frac{H_2\left(1-\frac{1}{\sqrt{PH_2+1}+1}\right)}{PH_2\times\frac{1}{\sqrt{PH_2+1}+1}+1} < \frac{\beta}{P}}
= {\pmb 1}_{H_2 <\frac{\beta^2+2\beta}{P}}. \nonumber
  \end{array}
\end{eqnarray}
Thus, an upper bound on $\mathscr{F}_1$ is
   \begin{align}
   \mathscr{F}_1
   & \leq
   \int_{\mathpzc{I}_1} {\pmb 1}_{H_1 < \beta\frac{\sqrt{P\Delta+1}+1}{P}} \prod_{i=1}^2 f_{H_i}(H_i)
  {\rm d}
  H_i
   +
   \int_{\mathpzc{I}_1} {\pmb 1}_{H_2 < \frac{\beta^2 + 2\beta}{P}} \prod_{i=1}^2 f_{H_i}(H_i)
  {\rm d}
  H_i
   \nonumber\\
   & \leq
   \int_{\frac{\beta}{P}}^{\beta\frac{\sqrt{P\Delta+1}+1}{P}} \frac{e^{-\frac{H_1}{\lambda_1}}}{\lambda_1}
   \int_{\frac{\beta}{P}}^{\Delta} \frac{e^{-\frac{H_2}{\lambda_2}}}{\lambda_2}
  {\rm d}H_1 {\rm d}H_2
   +
   \int_{\frac{\beta}{P}}^{\Delta} \frac{e^{-\frac{H_1}{\lambda_1}}}{\lambda_1}
   \int_{\frac{\beta}{P}}^{\frac{\beta^2 + 2\beta}{P}} \frac{e^{-\frac{H_2}{\lambda_2}}}{\lambda_2}
  {\rm d}H_1 {\rm d}H_2
  \nonumber\\
  & \leq
  \frac{e^{-\frac{\frac{\beta}{P}}{\lambda_1}}}{\lambda_1} \times\left[\beta\frac{\sqrt{P\Delta+1}+1}{P}-\frac{\beta}{P}\right]
  \times \frac{1}{\lambda_2} \times \left[\Delta - \frac{\beta}{P}\right]
  +
  \frac{1}{\lambda_1} \times \left[\Delta - \frac{\beta}{P}\right]
  \times
  \frac{e^{-\frac{\frac{\beta}{P}}{\lambda_2}}}{\lambda_2} \times\left[\frac{\beta^2 + 2\beta}{P}-\frac{\beta}{P}\right]
  \nonumber\\
  & \leq
  \frac{e^{-\frac{\frac{\beta}{P}}{\lambda_1}}}{\lambda_1} \times \beta\times\overbrace{\frac{\sqrt{P\Delta+1}}{P}}^
  {\leq \sqrt{P\Delta}+1}
  \times \frac{1}{\lambda_2} \times \Delta
  +
  \frac{1}{\lambda_1} \times \Delta
  \times
  \frac{e^{-\frac{\frac{\beta}{P}}{\lambda_2}}}{\lambda_2} \times \frac{\beta^2 + \beta}{P}
  \nonumber\\
  \label{F1_upperBound}
  & \leq D_{17} \times
e^{-\frac{D_{{18}}}{P}}
\times \frac{\sqrt{P\Delta}+1}{P} \times \Delta
+ D_{19} \times
e^{-\frac{D_{{20}}}{P}} \times\frac{\Delta}{P}.
   \end{align}

    For $\mathscr{F}_2$, let $\mathscr{F}_{2, i} = \int_{\mathpzc{I}_2} {\pmb 1}_{r_i\left(\alpha_{q_o}\right)
    < r_{\rm th}} \prod_{i=1}^2 f_{H_i}(H_i)
  {\rm d}
  H_i$ for $i = 1, 2$. Then, $\mathscr{F}_2 \leq
    \mathscr{F}_{2, 1} + \mathscr{F}_{2, 2}$. For $\mathscr{F}_{2, 1}$, since $H_1 > H_2$ for $(H_1, H_2) \in \mathpzc{I}_2$ and $g_{\geq}(x, y)$ is
 increasing on $x$ and $y$, we have
     \begin{align}
     {\pmb 1}_{r_1\left(\alpha_{q_o}\right) < r_{\rm th}}
     \label{exp01}
     & =
     {\pmb 1}_{\frac{2 H_1 \times q_{o}\left(H_2\right) }{\sqrt{\left[q_{o}\left(H_1\right) + q_{o}\left(H_2\right)\right]^2 + 4 q_{o}\left(H_1\right) q_{o}^2\left(H_2\right) P} + \left[q_{o}\left(H_1\right) + q_{o}\left(H_2\right)\right]} < \frac{\beta}{P}} \nonumber\\
     & \leq
     {\pmb 1}_{\frac{2 \left(q_{o}(H_1) - \Delta\right) \times q_{o}\left(H_2\right) }{\sqrt{\left[q_{o}\left(H_1\right) + q_{o}\left(H_2\right)\right]^2 + 4 q_{o}\left(H_1\right) q_{o}^2\left(H_2\right) P}
   + \left[q_{o}\left(H_1\right) + q_{o}\left(H_2\right)\right]} < \frac{\beta}{P}}
   = {\pmb 1}_{g_{\geq}\left(q_o(H_1), q_o(H_2)\right) < \frac{\beta}{P}\times\frac{1}{1 - \frac{\Delta}{q_o(H_1)}}} \\
   \label{exp02}
   &
   \leq {\pmb 1}_{g_{\geq}\left(q_o(H_1), q_o(H_2)\right) < \frac{\beta}{P}\times\left({1 + \frac{2\Delta}{q_o(H_1)}}\right)}
   \leq   {\pmb 1}_{g_{\geq}\left(q_o(H_1), q_o(H_2)\right) < \frac{\beta}{P}\times\left({1 + \frac{2\Delta}{q_o(H_2)}}\right)} \\
   \label{exp03}
   & \leq {\pmb 1}_{g_{\geq}\left(q_o(H_1), q_o(H_2)\right) < \frac{\beta}{P}\times\left({1 + \frac{2\Delta}{H_2}}\right)}
   \leq {\pmb 1}_{g_{\geq}\left(H_1, H_2\right) < \frac{\beta}{P}\times\left({1 + \frac{2\Delta}{H_2}}\right)},
     \end{align}
     where \eqref{exp01} follows from $q_o(H_1) \leq H_1 + \Delta$, \eqref{exp02} follows from
      $\left(1 - \frac{\Delta}{q_o(H_1)}\right) \times \left(1 + \frac{2\Delta}{q_o(H_1)}\right) \geq 1$ because
     $q_o(H_1) \geq 2\Delta > q_o(H_2) = \Delta$, and \eqref{exp03} follows from $q_o(H_2) \geq H_2$
     and $g_{\geq}\left(q_o(H_1), q_o(H_2)\right) \geq g_{\geq}\left(H_1, H_2\right)$. Then, we obtain $\mathscr{F}_{2, 1} \leq
     \int_{\mathpzc{I}_2^{'} = \mathpzc{I}_2 \cap \left\{(H_1, H_2): g_{\geq}\left(H_1, H_2\right)<
     \frac{\beta}{P}\times\left({1 + \frac{2\Delta}{H_2}}\right)\right\}} \prod_{i=1}^2 f_{H_i}(H_i)
  {\rm d}
  H_i$.

     We change the integration variables from $(H_1, H_2)$ to $(\phi, H_2)$ where $\phi = g_{\geq}(H_1, H_2)$.
     Then, $H_1 = \frac{\phi^2P + \phi}{H_2 - \phi}\times H_2$, and the Jacobian matrix is
     $\left|\frac{{\rm d}H_1}{{\rm d}\phi}\right| =\frac{2\phi P H_2 + H_2 - \phi^2 P}{\left(H_2 - \phi\right)^2} \times H_2
     \leq \frac{2\phi P H_2 + H_2}{\left(H_2 - \phi\right)^2} \times H_2 \leq \frac{2\phi P H_2 + 2 H_2}{\left(H_2 - \phi\right)^2} \times H_2
     = \frac{2\left(\phi P +1 \right)}{\left(H_2 - \phi\right)^2} \times H_2^2$.
     For any $(H_1, H_2) \in \mathpzc{I}_2^{'}$, we have: (\rmnum{1}) $\frac{\beta}{P} \leq \phi = g_{\geq}(H_1, H_2) \leq H_2$ and
      $\phi < \frac{\beta}{P}\times\left({1 + \frac{2\Delta}{H_2}}\right)$; (\rmnum{2}) since $H_1 \geq H_2$,
      $H_1 = \frac{\phi^2P + \phi}{H_2 - \phi}\times H_2 \geq H_2$, then, $H_2 \leq \phi^2P + 2\phi$. Therefore, $\mathscr{F}_{2, 1}$
      is derived as $\mathscr{F}_{2, 1} \leq
     \int_{\mathpzc{I}_2^{''} =
     \left\{
       (H_1, H_2): \frac{\beta}{P} \leq H_2 \leq \phi^2P + 2\phi,
       \frac{\beta}{P} \leq \phi \leq \min\left\{H_2, \frac{\beta}{P}\left(1 + \frac{2\Delta}{H_2}\right)\right\}
     \right\}
     } \prod_{i=1}^2 f_{H_i}(H_i)
  {\rm d}
  H_i$.
  \begin{figure}
\centering
  \includegraphics[width=2in]{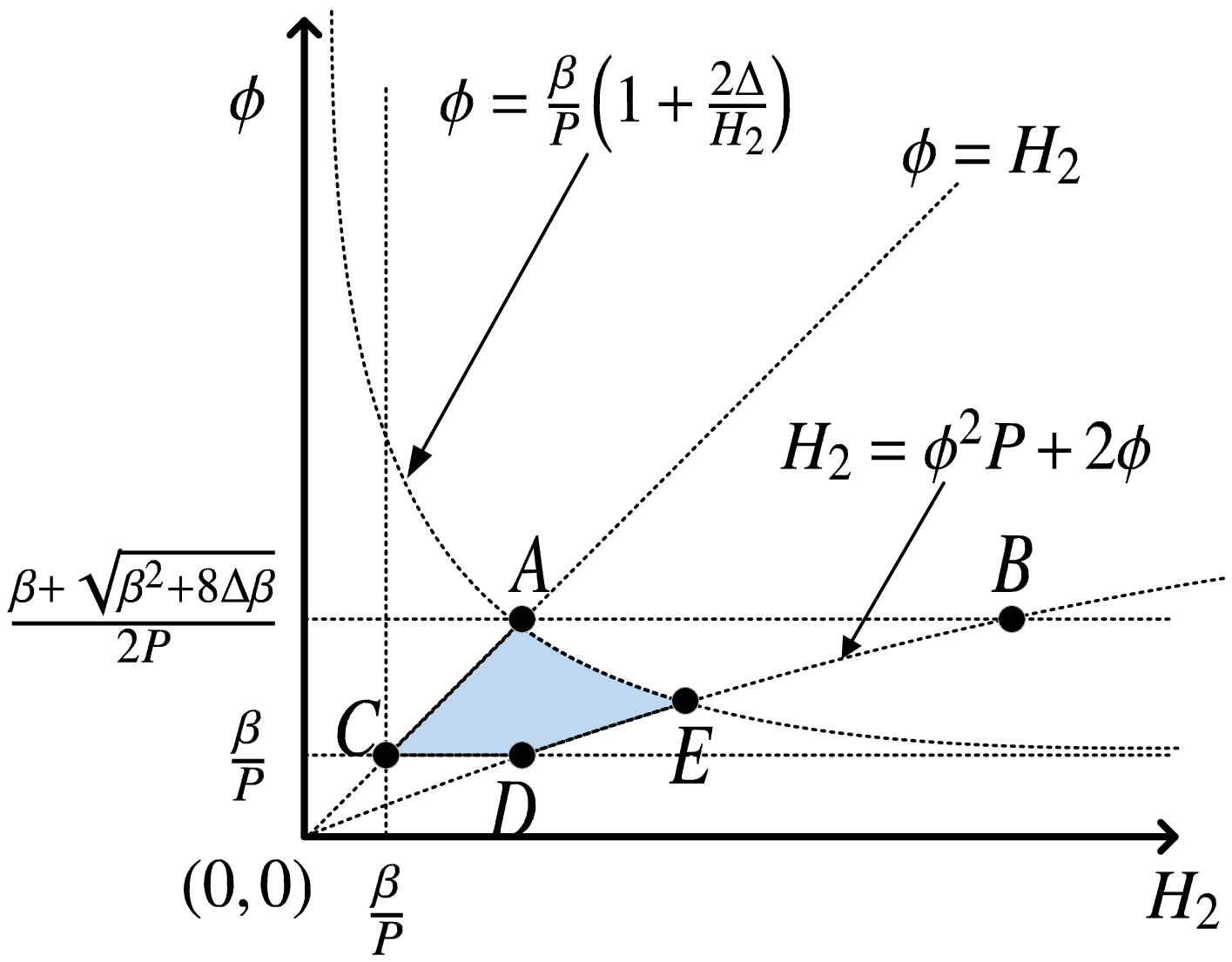}
  \caption{The integration region $\mathpzc{I}_2^{''}$.}
  \label{IntegrationArea}
\end{figure}
      The integration region $\mathpzc{I}_2^{''}$ is demonstrated in Fig. \ref{IntegrationArea} as the shaded area surrounded by the points $A, E, D$ and $C$.
      It can be strictly proven that $\mathpzc{I}_2^{''}$ is within the region surrounded the
     points $A, B, D$ and $C$. Recall that $H_1 = \frac{\phi^2P+\phi}{H_2 - \phi}\times H_2$ and $\left|\frac{{\rm d}H_1}{{\rm d}
     \phi}\right| \leq \frac{2\left(\phi P +1 \right)}{\left(H_2 - \phi\right)^2} \times H_2^2$. Then, we have
     \begin{align}
       \mathscr{F}_{2, 1}
       & \leq \int_{\frac{\beta}{P}}^{\frac{\beta+\sqrt{\beta^2 + 8\Delta\beta}}{2P}}
       \int_{\phi}^{\phi^2P + 2\phi}
       \frac{e^{-\frac{H_2}{\lambda_2}}}{\lambda_2} \times
       \frac{e^{-\frac{1}{\lambda_1} \times \frac{\phi^2P+\phi}{H_2 - \phi}\times H_2}}{\lambda_1}
       \times \frac{2\left(\phi P +1 \right)}{\left(H_2 - \phi\right)^2} \times H_2^2
       {\rm d}\phi {\rm d}H_2
        \nonumber\\
        & \stackrel{z = H_2 - \phi}{=} D_{21}
        \int_{\frac{\beta}{P}}^{\frac{\beta+\sqrt{\beta^2 + 8\Delta\beta}}{2P}}
       \int_{0}^{\phi^2P + \phi}
        \underbrace{e^{-\frac{z}{\lambda_2}-\frac{\phi}{\lambda_2}}}_{
        \leq e^{-\frac{z}{\lambda_2}} \times e^{-\frac{\frac{\beta}{P}}{\lambda_2}}} \times
       \underbrace{e^{-\frac{1}{\lambda_1} \times \frac{\phi^2P+\phi}{z}\times (z+\phi)}}_{
       \leq e^{-\frac{\phi^2\left(\phi P+ 1\right)}{\lambda_1 z}}}
       \times \frac{\phi P +1}{z^2} \times (z + \phi)^2
       {\rm d}\phi {\rm d}z
        \nonumber\\
        & \leq
        D_{21} \times e^{-\frac{\beta}{\lambda_2 P}}
        \int_{\frac{\beta}{P}}^{\frac{\beta+\sqrt{\beta^2 + 8\Delta\beta}}{2P}}
       \int_{0}^{\phi^2P + \phi}
       e^{-\frac{z}{\lambda_2}}
       e^{-\frac{\phi^2\left(\phi P+ 1\right)}{\lambda_1 z}}
       \times \left(\phi P + 1\right)
       \times \left[1 + \frac{2\phi}{z} + \frac{\phi^2}{z^2}\right]
       {\rm d}\phi {\rm d}z
        \nonumber\\
        & =
        D_{21} \times e^{-\frac{\beta}{\lambda_2 P}}
        \int_{\frac{\beta}{P}}^{\frac{\beta+\sqrt{\beta^2 + 8\Delta\beta}}{2P}}
       \int_{0}^{\phi^2P + \phi}
       e^{-\frac{z}{\lambda_2}}
       \underbrace{e^{-\frac{\phi^2\left(\phi P+ 1\right)}{\lambda_1 z}}}_{\leq 1}
       \times {\left(\phi P + 1\right)}
       {\rm d}\phi {\rm d}z
       \nonumber\\
       & +
       2 \times D_{21} \times e^{-\frac{\beta}{\lambda_2 P}}
        \int_{\frac{\beta}{P}}^{\frac{\beta+\sqrt{\beta^2 + 8\Delta\beta}}{2P}}
       \int_{0}^{\phi^2P + \phi}
       e^{-\frac{z}{\lambda_2}}
       e^{-\frac{\phi^2\left(\phi P+ 1\right)}{\lambda_1 z}}
       \times \left(\phi P + 1\right) \phi \times z^{-1}
       {\rm d}\phi {\rm d}z
       \nonumber\\
       & +
        D_{21} \times e^{-\frac{\beta}{\lambda_2 P}}
        \int_{\frac{\beta}{P}}^{\frac{\beta+\sqrt{\beta^2 + 8\Delta\beta}}{2P}}
       \int_{0}^{\phi^2P + \phi}
       e^{-\frac{z}{\lambda_2}}
       e^{-\frac{\phi^2\left(\phi P+ 1\right)}{\lambda_1 z}}
       \times \left(\phi P + 1\right) \phi^2 \times z^{-2}
       {\rm d}\phi {\rm d}z
       \nonumber\\
       & \leq
        D_{21} \times e^{-\frac{\beta}{\lambda_2 P}}
        \int_{\frac{\beta}{P}}^{\frac{\beta+\sqrt{\beta^2 + 8\Delta\beta}}{2P}} (\phi P + 1)
        {\rm d}\phi
       \int_{0}^{\infty}
       e^{-\frac{z}{\lambda_2}} {\rm d}z
       \nonumber\\
       & +
       \label{explain_a0}
       2 \times D_{21} \times e^{-\frac{\beta}{\lambda_2 P}}
        \int_{\frac{\beta}{P}}^{\frac{\beta+\sqrt{\beta^2 + 8\Delta\beta}}{2P}}
        \left(\phi P + 1\right) \phi
       \underbrace{\int_{0}^{\infty}
       e^{-\frac{z}{\lambda_2}}
       e^{-\frac{\phi^2\left(\phi P+ 1\right)}{\lambda_1 z}}
       \times  z^{-1}
       {\rm d}z}_{ = 2 {\mathcal{{K}}}_0\left(2\sqrt{\frac{\phi^2\left(\phi P+1\right)}{\lambda_1\lambda_2}}\right)
       \leq \frac{2\sqrt{\lambda_1\lambda_2}}{\phi\sqrt{\phi P+1}}}
       {\rm d}\phi
       \\
       \label{explain_a1}
       & +
        D_{21} \times e^{-\frac{\beta}{\lambda_2 P}}
        \int_{\frac{\beta}{P}}^{\frac{\beta+\sqrt{\beta^2 + 8\Delta\beta}}{2P}}
        \left(\phi P + 1\right) \phi^2
       \underbrace{\int_{0}^{\infty}
       e^{-\frac{z}{\lambda_2}}
       e^{-\frac{\phi^2\left(\phi P+ 1\right)}{\lambda_1 z}}
        \times z^{-2}{\rm d}z }_{
        \begin{subarray}{l}
        =2 \left(\frac{\phi^2\left(\phi P+1\right)\lambda_2}{\lambda_1}\right)^{-\frac{1}{2}}
        {\mathcal{{K}}}_1\left(2\sqrt{\frac{\phi^2\left(\phi P+1\right)}{\lambda_1\lambda_2}}\right)
       =\frac{\lambda_1}{\phi^2 \left(\phi P+1\right)}
        \end{subarray}
        }
       {\rm d}\phi
       \\
       \label{upperBound_F21}
       & \leq D_{22} \times e^{-\frac{\beta}{\lambda_2 P}} \times \int_{\frac{\beta}{P}}^{\frac{\beta+\sqrt{\beta^2 + 8\Delta\beta}}{2P}}
       (\phi P + 1)
       {\rm d}\phi
       +
       D_{23} \times e^{-\frac{\beta}{\lambda_2 P}}
        \int_{\frac{\beta}{P}}^{\frac{\beta+\sqrt{\beta^2 + 8\Delta\beta}}{2P}}
        {\sqrt{\phi P + 1}}
       {\rm d}\phi \nonumber\\
       &
        +
        D_{24} \times e^{-\frac{\beta}{\lambda_2 P}}
        \int_{\frac{\beta}{P}}^{\frac{\beta+\sqrt{\beta^2 + 8\Delta\beta}}{2P}}
       {\rm d}\phi,
     \end{align}
     where \eqref{explain_a0} and \eqref{explain_a1} are derived based
     on: (\rmnum{1}) $\int_0^{\infty} x^{v - 1}e^{-\frac{\beta}{x} - \gamma x}{\rm d}x =
      2\left(\frac{\beta}{\gamma}\right)^{\frac{v}{2}}{\mathcal{K}}_{v}\left(2\sqrt{\beta\gamma}\right)$ \
      \cite[ Eq. (3.471.9)]{Tables_Integrals}
      with ${\mathcal{K}}_v (z)$ being the modified bessel function of the second kind; (\rmnum{2}) ${\mathcal{K}}_0 (x)
      \leq \frac{2}{x}$ and
      ${\mathcal{K}}_{-1} (x) = {\mathcal{K}}_{1} (x) \leq \frac{1}{x}$ for $x > 0$
      \cite[ Eq. (27)]{Erdem_JSAC_Distributed_Beamforming_Relay}. After basic calculations, we obtain
      \begin{align}
      \label{f21_final_bound}
        \mathscr{F}_{2, 1} \leq D_{25} \times e^{-\frac{D_{26}}{P}} \times \frac{\Delta + \sqrt{\Delta}}{P}.
      \end{align}

      For $\mathscr{F}_{2, 2}$, because $H_1 > H_2$ and $q_o(H_1) > q_o(H_2) = \Delta \geq \geq H_2$, we have
    \begin{align}
    \label{f22_alpha}
      \alpha_{q_o} & =
    \frac{2q_o(H_2)}{\sqrt{\left[q_o(H_1)+q_o(H_2)\right]^2 + 4 q_o(H_1) q_o^2(H_2)P}+q_o(H_1)+q_o(H_2)} \nonumber\\
    & \leq
    \frac{2q_o(H_2)}{\sqrt{\left[q_o(H_2)+q_o(H_2)\right]^2 + 4 q_o(H_2) q_o^2(H_2)P}+q_o(H_2)+q_o(H_2)} \nonumber\\
    & =\frac{1}{\sqrt{q_o(H_2)P+1}+1} = \frac{1}{\sqrt{P\Delta+1}+1}.
    \end{align}
    Since $r_2\left(\alpha_{q_o}\right)$ is decreasing on $\alpha_{q_o}$, we obtain $r_2\left(\alpha_{q_o}\right) \geq
     r_2\left(\frac{1}{\sqrt{P\Delta+1}+1}\right)$ and ${\pmb 1}_{r_2\left(\alpha_{q_o}\right) < r_{\rm th}} \leq
     {\pmb 1}_{r_2\left(\frac{1}{\sqrt{P\Delta+1}+1}\right) < r_{\rm th}} ={\pmb 1}_{\frac{H_2\left(1 - \frac{1}{\sqrt{P\Delta+1}+1}\right)}{PH_2\frac{1}{\sqrt{P\Delta+1}+1} + 1} < \frac{\beta}{P}} = {\pmb 1}_{\frac{H_2\sqrt{P\Delta+1}}{PH_2+1 + \sqrt{P\Delta+1}} < \frac{\beta}{P}} \leq {\pmb 1}_{\frac{H_2\sqrt{P\Delta+1}}{P\Delta+1 + \sqrt{P\Delta+1}} < \frac{\beta}{P}} = {\pmb 1}_{H_2 \leq \frac{\beta\left(\sqrt{P\Delta+1} + 1\right)}{P}}$. Similar to \eqref{F1_upperBound}, we will have
     \begin{align}
     \label{f22}
       \mathscr{F}_{2, 2} \leq \int_{\mathpzc{I}_2} {\pmb 1}_{H_2 < \beta\frac{\sqrt{P\Delta+1}+1}{P}} \prod_{i=1}^2 f_{H_i}(H_i)
  {\rm d}
  H_i \leq D_{27} \times
e^{-\frac{D_{{28}}}{P}}
\times \frac{\sqrt{P\Delta}+1}{P} \times \Delta.
     \end{align}

      For $\mathscr{F}_3$, since $q_o(H_1) \geq q_o(H_2)$ and $q_o(H_i) - \Delta \leq H_i \leq q_o(H_i)$ for $i = 1, 2$, we obtain
      \begin{align}
          r_1\left(\alpha_{q_o}\right) & = \log_2\left(1 + P H_1 \times \alpha_{q_o}\right)
      \geq \log_2\left(1 + P \times (q_o(H_1) - \Delta) \times \alpha_{q_o}\right)   \nonumber\\
      &= \log_2\left(1 + P \times q_o(H_1) \times \alpha_{q_o} - P \times \Delta \times \alpha_{q_o}\right)\\
      &= \log_2\left(
        1 + P \times g_{\geq}(q_o(H_1), q_o(H_2)) - P\times g_{\geq}(q_o(H_1), q_o(H_2)) \times \frac{\Delta}{q_o(H_1)}
        \right) \\
        &=\log_2\left(
        1 + P \times g_{\geq}(q_o(H_1), q_o(H_2)) \times \left(1 - \frac{\Delta}{q_o(H_1)}\right)
        \right) \nonumber\\
        & \geq
        \log_2\left(
        1 + P \times g_{\geq}(q_o(H_1), q_o(H_2)) \times \left(1 - \frac{\Delta}{q_o(H_2)}\right)
        \right)\\
        &\geq
        \log_2\left(
        1 + P \times g_{\geq}(H_1, H_2) \times \left(1 - \frac{\Delta}{q_o(H_2)}\right)
        \right), \\
      r_2\left(\alpha_{q_o}\right)
      &= \log_2\left(1 + \frac{H_2 \left(1 - \alpha_{q_o}\right)}{H_2 \alpha_{q_o} + \frac{1}{P}}\right)
      = \log_2\left(1 + \frac{(q_o(H_2) - \Delta) \times
       \left(1 - \alpha_{q_o}\right)}{\left(q_o(H_2) - \Delta\right) \times \alpha_{q_o} + \frac{1}{P}}\right)\\
       & \geq   \log_2\left(1 + \frac{(q_o(H_2) - \Delta)
       \times \left(1 - \alpha_{q_o}\right)}{q_o(H_2) \times \alpha_{q_o} + \frac{1}{P}}\right)
       = \log_2\left(1 + \frac{q_o(H_2) \times \left(1 - \alpha_{q_o}\right)}{q_o(H_2) \times \alpha_{q_o} + \frac{1}{P}}
        -
        \frac{\Delta\times \left(1 - \alpha_{q_o}\right)}{q_o(H_2) \times \alpha_{q_o} + \frac{1}{P}}\right)\\
        &
        = \log_2\left(
        1 + P \times g_{\geq}(q_o(H_1), q_o(H_2)) \times \left(1 - \frac{\Delta}{q_o(H_2)}\right)
        \right)\\
        & \geq
        \log_2\left(
        1 + P \times g_{\geq}(H_1, H_2) \times \left(1 - \frac{\Delta}{q_o(H_2)}\right)
        \right),
      \end{align}
      Therefore, we have
      \begin{align}
        {\pmb 1}_{\min\left\{r_1\left(\alpha_{q_o}\right), r_2\left(\alpha_{q_o}\right)\right\} < {r}_{\rm th}}
      & \leq
        {\pmb 1}_{\log_2\left(
        1 + P \times g_{\geq}(H_1, H_2) \times \left(1 - \frac{\Delta}{q_o(H_2)}\right)
        \right) < r_{\rm th}}
         = {\pmb 1}_{g_{\geq}(H_1, H_2) < \frac{\beta}{P\left(1 - \frac{\Delta}{q_o(H_2)}\right)}}\nonumber\\
         \label{last_inequality}
         &
        \leq
        {\pmb 1}_{g_{\geq}(H_1, H_2) < \frac{\beta}{P}\left(1 + \frac{2\Delta}{q_o(H_2)}\right)}
        \leq  {\pmb 1}_{g_{\geq}(H_1, H_2) < \frac{\beta}{P}\left(1 + \frac{2\Delta}{H_2}\right)},
      \end{align}
      where \eqref{last_inequality} is because $\left(1 - \frac{\Delta}{q_o(H_2)}\right) \times \left(1 + \frac{2\Delta}{q_o(H_2)}\right)
      = 1 + \frac{\Delta}{q_o(H_2)} - 2 \left(\frac{\Delta}{q_o(H_2)}\right)^2 \geq 1$ since
      $q_o(H_2) \geq 2\Delta$ for $(H_1, H_2) \in \mathpzc{I}_3$, and $q_o(H_2) \geq H_2$. Similar to \eqref{exp03}
      and \eqref{upperBound_F21}, we can obtain an upper bound on $\mathscr{F}_3$ (the detailed derivation is omitted due to similarity).
      For $\mathscr{F}_4$, its upper bound can be developed in the same way as the upper
      bound on $\mathscr{F}_3$.

      For $\mathscr{F}_5$, when $H_1 \geq H_2\geq \Delta$, since $g_{\geq}(H_1, H_2) \geq
      \frac{2H_1 H_2}{\sqrt{(H_1 + H_1)^2 + 4H_1^2 H_2 P} + H_1 + H_1} = \frac{H_2}{\sqrt{PH_2+1}+1}$, we obtain from
      \eqref{last_inequality} that
      \begin{align}
      \label{f5_bound}
        {\pmb 1}_{\min\left\{r_1\left(\alpha_{q_o}\right), r_2\left(\alpha_{q_o}\right)\right\} < r_{\rm th}}
      & \leq  {\pmb 1}_{g_{\geq}(H_1, H_2) < \frac{\beta}{P}\left(1 + \frac{2\Delta}{H_2}\right)}
      \leq
      {\pmb 1}_{g_{\geq}(H_1, H_2) < \frac{\beta}{P}\left(1 + \frac{2\Delta}{\Delta}\right) = \frac{3\beta}{P}} \leq
        {\pmb 1}_{\frac{H_2}{\sqrt{1 + H_2 P} + 1} < \frac{3\beta}{P}}
        = {\pmb 1}_{H_2 < \frac{D_{29}}{P}},
      \end{align}
      where $D_{29} = \left(3\beta+1\right)^2-1$. Similarly, when $H_1 < H_2$,
      we have ${\pmb 1}_{\min\left\{r_1\left(\alpha_{q_o}\right),
      r_2\left(\alpha_{q_o}\right)\right\} < r_{\rm th} } \leq {\pmb 1}_{H_1 < \frac{D_{29}}{P}}$.
      Therefore, an upper bound on $\mathscr{F}_5$ is
      \begin{align}
        \mathscr{F}_5
        & \leq
        \int_{\mathpzc{I}_{4} \cap \left\{(H_1, H_2): H_1 \geq H_2\right\}}
        {\pmb 1}_{H_2 < \frac{D_{29}}{P}} \times
        \prod_{i=1}^2 f_{H_i}(H_i)
  {\rm d}
  H_i
   +
        \int_{\mathpzc{I}_{4} \cap  \left\{(H_1, H_2): H_1 < H_2\right\}}
        {\pmb 1}_{H_1 < \frac{D_{29}}{P}} \times
        \prod_{i=1}^2 f_{H_i}(H_i)
  {\rm d}
  H_i \nonumber\\
        \label{inequalityF4}
        & \leq
        \underbrace{\int_{T\Delta}^{\infty}\frac{1}{\lambda_1}e^{-\frac{H_1}{\lambda_1}} {\rm d}H_1}_{=e^{-\frac{T\Delta}{\lambda_1}}}
        \int_{\frac{\beta}{P}}^{\frac{D_{29}}{P}} \frac{1}{\lambda_2}\underbrace{e^{-\frac{H_2}{\lambda_2}}}_{\leq
        e^{-\frac{\beta}{P\lambda_2}} \leq e^{-\frac{\beta}{P\lambda_1}}} {\rm d}H_2
        +
        \underbrace{\int_{T\Delta}^{\infty}\frac{1}{\lambda_2}e^{-\frac{H_2}{\lambda_2}} {\rm d}H_2}_{=e^{-\frac{T\Delta}{\lambda_2}} \leq e^{-\frac{T\Delta}{\lambda_1}}}
        \int_{\frac{\beta}{P}}^{\frac{D_{29}}{P}} \frac{1}{\lambda_1}\underbrace{e^{-\frac{H_1}{\lambda_1}}}_{\leq
        e^{-\frac{\beta}{P\lambda_1}}} {\rm d}H_1\\
        & \leq e^{-\frac{T\Delta}{\lambda_1}} \times \frac{1}{\lambda_2} \times e^{-\frac{\beta}{P\lambda_1}} \times \frac{D_{29}-\beta}{P}
        + e^{-\frac{T\Delta}{\lambda_1}} \times \frac{1}{\lambda_1} \times e^{-\frac{\beta}{P\lambda_1}} \times \frac{D_{29}-\beta}{P} \nonumber\\
        \label{f5_final_bound}
        & \leq D_{30} \times e^{-\frac{D_{31}}{P}} \times \frac{1}{P} \times e^{-\frac{T\Delta}{\lambda_1}},
        \end{align}
        where \eqref{inequalityF4} is based on the assumption that $\lambda_1 \geq \lambda_2$. Summarizing the upper bounds
        on $\mathcal{F}_1, \ldots, \mathcal{F}_5$ in \eqref{F1_upperBound}, \eqref{f21_final_bound} and \eqref{f5_final_bound} results in
        \begin{align}
        \label{final_bound}
          \mathtt{out}_{{\rm loss}, q_{o}}
          & \leq D_{32} \times e^{-\frac{D_{33}}{P}} \times \left[\frac{\sqrt{\Delta} + \Delta + e^{-\frac{T\Delta}{\lambda_1}}}{P}
          + \frac{\Delta^{\frac{3}{2}}}{\sqrt{P}}
          \right] \nonumber\\
          & \leq D_{34} \times e^{-\frac{D_{33}}{P}} \times \frac{\sqrt{P}+1}{P} \times \max\left\{\Delta^{\frac{1}{2}},
          \Delta, \Delta^{\frac{3}{2}},  e^{-\frac{T\Delta}{\lambda_1}}\right\}
          \nonumber\\
          & =
          D_{34} \times e^{-\frac{D_{33}}{P}} \times \frac{\sqrt{P}+1}{P} \times \max\left\{\Delta^{\frac{1}{2}},
          \Delta^{\frac{3}{2}},  e^{-\frac{T\Delta}{\lambda_1}}\right\},
        \end{align}
        which completes the proof of the upper bound on $\mathtt{out}_{{\rm loss}, q_{o}}$ in \eqref{outBound}.
        \hfill$\blacksquare$

\section*{Appendix C: Proof of Lemma 4}
Similar to \eqref{f5_bound}, the full-CSI outage probability in \eqref{out_min} can be
derived as
\begin{align}
  \mathtt{out}_{\min}
  & = \int {\pmb 1}_{\min\left\{r_1\left(\alpha^{\star}\right), r_2\left(\alpha^{\star}\right)\right\} < r_{\rm th}}
  \prod_{i=1}^2 f_{H_i}(H_i)
  {\rm d}
  H_i \nonumber\\
  & = \int_{\left\{(H_1, H_2): H_1 \geq H_2\right\}} {\pmb 1}_{g_{\geq}(H_1, H_2) < \frac{\beta}{P}} \prod_{i=1}^2 f_{H_i}(H_i)
  {\rm d}
  H_i  + \int_{\left\{(H_1, H_2): H_1 < H_2\right\}} {\pmb 1}_{g_{<}(H_1, H_2) < \frac{\beta}{P}} \prod_{i=1}^2 f_{H_i}(H_i)
  {\rm d}
  H_i \nonumber\\
  & \leq \int_{\left\{(H_1, H_2): H_1 \geq H_2\right\}} {\pmb 1}_{H_2 < \frac{D_{35}}{P}} \prod_{i=1}^2 f_{H_i}(H_i)
  {\rm d}
  H_i
  +
  \int_{\left\{(H_1, H_2): H_1 < H_2\right\}} {\pmb 1}_{H_1 < \frac{D_{35}}{P}} \prod_{i=1}^2 f_{H_i}(H_i)
  {\rm d}
  H_i \nonumber\\
  \label{diversity_out}
  & \leq \int_{0}^{\frac{D_{35}}{P}}\frac{e^{-\frac{H_2}{\lambda_2}}}{\lambda_2}{\rm d}H_2
  +
  \int_{0}^{\frac{D_{35}}{P}}\frac{e^{-\frac{H_1}{\lambda_1}}}{\lambda_1}{\rm d}H_1
  = 1 - e^{-\frac{D_{35}}{P\lambda_2}} + 1 - e^{-\frac{D_{35}}{P\lambda_1}}
  \leq \frac{D_{35}\left(\frac{1}{\lambda_1} + \frac{1}{\lambda_2}\right)}{P},
\end{align}
where $D_{35} = (\beta+1)^2-1$. Thus, the diversity order is $d_{\max} =
\lim_{P\rightarrow\infty}\frac{\log \mathtt{out}_{\min}}{\log P} \geq 1$. It is
straightforward to show that $d_{\max} \leq 1$, which means $d_{\max} = 1$. Since
$\mathtt{out}_{\min} = \text{Pr}\left\{r_1\left(\alpha^{\star}\right) < r_{\rm th}\right\} =
\text{Pr}\left\{r_2\left(\alpha^{\star}\right) < r_{\rm th}\right\}$, the maximum
achievable diversity order for both receivers is $1$.

When $q_{o}(\cdot)$ is employed, the outage probability of Receiver $i$ is
$\mathtt{out}_{q_{o}, i} = \int {\pmb 1}_{r_i\left(\alpha_{q_o}\right) < r_{\rm th}}
\prod_{i=1}^2 f_{H_i}(H_i)
  {\rm d}
  H_i$ for $i = 1, 2$. Similar to the derivations of $\mathscr{F}_i$ for $i = 1, \ldots, 5$ in \eqref{F1_upperBound}, \eqref{f21_final_bound} and \eqref{f5_final_bound}, we will obtain
  $\mathtt{out}_{q_{o}, 1} \leq \mathtt{out}_{\min} + D_{36} \times e^{-\frac{D_{37}}{P}} \times \left[\frac{\sqrt{\Delta} +e^{-\frac{T\Delta}{\lambda_1}}}{P}
          + \frac{\Delta^{\frac{3}{2}}}{\sqrt{P}}
          \right]$ and $\mathtt{out}_{q_{o}, 2} \leq \mathtt{out}_{\min} + D_{38} \times e^{-\frac{D_{39}}{P}} \times \frac{D_{40} + \Delta +
          e^{-\frac{T\Delta}{\lambda_1}}}{P}$.\footnote{Note that when we derive the diversity order for $\mathscr{F}_{2, 2}$, we will not use the upper bound in \eqref{f22}. We can further obtain from \eqref{f22_alpha} that $\alpha_{q_o} \leq \frac{1}{\sqrt{P\Delta+1}+1} \leq \frac{1}{\sqrt{PH_2+1}+1}$, then, ${\pmb 1}_{r_2\left(\alpha_{q_o}\right) < r_{\rm th}} \leq {\pmb 1}_{r_2\left(\frac{1}{\sqrt{PH_2+1}+1}\right) < r_{\rm th}} = {\pmb 1}_{H_2 < \frac{\beta^2 +\beta}{P}}$, and it is trivial to obtain that $\mathscr{F}_{2, 2} \leq  D_{41} \times \frac{e^{-\frac{D_{42}}{P}}}{P}$.} Therefore, for fixed $\Delta$, the diversity orders of $1/2$ and $1$ are achievable for
          Receivers 1 and 2, respectively.

  For Receiver 1, $\frac{\Delta^{\frac{3}{2}}}{\sqrt{P}}$ in the upper bound on $\mathtt{out}_{q_{o}, 1}$
  is the bottleneck for diversity gains.
  If we scale $\Delta$ as $\Delta^{\frac{3}{2}}\sim_P \frac{1}{\sqrt{P}}$,
    i.e., $\Delta\sim_P P^{-\frac{1}{3}}$, the diversity order of $1$ is also achievable for Receiver 1.

\hfill$\blacksquare$

\section*{Appendix D: Proof of Lemma 5}

Given $K$ and $\beta > 0$, define the following two optimization problems:
\begin{enumerate}
\item[(\textbf{P1})] $r^{\star}_{\max}(K, \beta) =
\max\limits_{{\pmb \alpha} = \left[\alpha_1, \ldots,\alpha_K\right]} \min\limits_{k = 1,
\ldots, K} r_k ({\pmb \alpha})$, subject to $0 \leq \alpha_k\leq \beta$ and
$\sum_{k=1}^K \alpha_k = \beta$.
\item[(\textbf{P2})] ${r}^{\dag}_{\max}(K, \beta) =
\max\limits_{{\pmb \alpha} = \left[\alpha_1, \ldots,\alpha_K\right]} \min\limits_{k = 1,
\ldots, K} r_k ({\pmb \alpha})$, subject to $r_1 ({\pmb \alpha}) = \cdots = r_K ({\pmb \alpha})$, $0 \leq \alpha_k\leq \beta$, and $\sum_{k=1}^K \alpha_k = \beta$,
\end{enumerate}
where (\textbf{P1}) is the original optimization problem in \eqref{kuser} when $\beta = 1$. We will show that the maximum minimum rates of (\textbf{P1}) and (\textbf{P2}) are the same, i.e., $r^{\star}_{\max}(K, \beta) = {r}^{\dag}_{\max}(K, \beta)$, which proves the lemma.

Denote the optimal power allocations for (\textbf{P1}) and (\textbf{P2}) by ${\pmb \alpha}^{\star}_{K}(\beta) = \left[{\alpha}^{\star}_{1, K}(\beta), \ldots, {\alpha}^{\star}_{K, K}(\beta)\right]$ and ${\pmb \alpha}^{\dag}_{K}(\beta) = \left[{\alpha}^{\dag}_{1, K}(\beta), \ldots, {\alpha}^{\dag}_{K, K}(\beta)\right]$, respectively. Since $r^{\star}_{\max}(K, \beta) \geq {r}^{\dag}_{\max}(K, \beta)$, it is sufficient to prove that $r^{\star}_{\max}(K, \beta) \leq {r}^{\dag}_{\max}(K, \beta)$.

The proof for $K = 2$ is provided in the proof of Theorem 2. By induction,
assume $r^{\star}_{\max}(K, \beta) = {r}^{\dag}_{\max}(K, \beta)$ holds for $K = K_1$. When
$K = K_1 + 1$, there are two possibilities:
\begin{itemize}
\item[(\rmnum{1})] If $r_{K_1+1}\left({\pmb \alpha}^{\star}_{K_1 + 1}(\beta)\right) \geq r_{K_1+1}
    \left({\pmb \alpha}^{\dag}_{K_1 + 1}(\beta)\right)$, since $r_{K_1+1}\left({\pmb \alpha}\right) = \log_2\left(1 + \frac{\alpha_{K_1+1}}{\sum_{i=1}^{K_1} \alpha_i + \frac{1}{PH_{K_1+1}}}\right) = \log_2\left(1 + \frac{\alpha_{K_1+1}}{\beta - \alpha_{K_1+1} + \frac{1}{PH_{K_1+1}}}\right)$ for any $\pmb \alpha$ satisfying $\sum_{i=1}^{K_1+1}\alpha_i = \beta$, it must have
    ${\alpha}^{\star}_{K_1+1, K_1+1}(\beta) \geq {\alpha}^{\dag}_{K_1+1, K_1+1}(\beta)$,
    then, $\beta_1 = \sum_{k=1}^{K_1} {\alpha}^{\star}_{k, K_1+1}(\beta) = \beta -
    {\alpha}^{\star}_{K_1+1, K_1+1}(\beta) \leq \beta - {\alpha}^{\dag}_{K_1+1,
    K_1+1}(\beta) = \sum_{k = 1}^{K_1}{\alpha}^{\dag}_{k, K_1+1}(\beta) = \beta_2$.
    Next, we obtain
\begin{align}
r^{\star}_{\max}(K_1 + 1, \beta) & = \min\left\{ \left\{\min_{k=1, \ldots, K_1} r_k\left({\pmb \alpha}^{\star}_{K_1 + 1}(\beta)\right)\right\}, r_{K_1+1}\left({\pmb \alpha}^{\star}_{K_1 + 1}(\beta)\right)\right\} \nonumber\\
\label{explain_01101}
& \leq \min\left\{r^{\star}_{\max}\left(K_1, \beta_1\right), r_{K_1+1}\left({\pmb \alpha}^{\star}_{K_1 + 1}(\beta)\right)\right\}
\\
\label{explain_01102}
& = \min\left\{{r}^{\dag}_{\max}\left(K_1, \beta_1\right), r_{K_1+1}\left({\pmb \alpha}^{\star}_{K_1 + 1}(\beta)\right)\right\}
\\
\label{explain_01103}
& \leq \min\left\{{r}^{\dag}_{\max}\left(K_1, \beta_2\right), r_{K_1+1}\left({\pmb \alpha}^{\star}_{K_1 + 1}(\beta)\right)\right\}
\\
\label{explain_01104}
& = \min\left\{{r}^{\dag}_{\max}\left(K_1+1, \beta\right), r_{K_1+1}\left({\pmb \alpha}^{\star}_{K_1 + 1}(\beta)\right)\right\} \\
& = \min\left\{r_{K_1+1}\left({\pmb \alpha}^{\dag}_{K_1 + 1}(\beta)\right), r_{K_1+1}\left({\pmb \alpha}^{\star}_{K_1 + 1}(\beta)\right)\right\} \nonumber\\
& = r_{K_1+1}\left({\pmb \alpha}^{\dag}_{K_1 + 1}(\beta)\right) = {r}^{\dag}_{\max}\left(K_1+1, \beta\right). \nonumber
\end{align}
Thus, $r^{\star}_{\max}(K_1 + 1, \beta) \leq {r}^{\dag}_{\max}(K_1 + 1, \beta)$. The inequality
\eqref{explain_01101} is due to the optimality of $r^{\star}_{\max}\left(K_1, \beta_1\right)$;
\eqref{explain_01102} arises from the assumption that $r^{\star}_{\max}(K, \beta_1) =
{r}^{\dag}_{\max}(K, \beta_1)$ when $K = K_1$; \eqref{explain_01103} is because
${r}^{\dag}_{\max}(K, \beta)$ is non-decreasing on $\beta$; \eqref{explain_01104} holds since ${r}^{\dag}_{\max}\left(K_1, \beta_2\right) = {r}^{\dag}_{\max}\left(K_1+1, \beta\right)$.
\item[(\rmnum{2})] If $r_{K_1+1}\left({\pmb \alpha}^{\star}_{K_1 + 1}(\beta)\right) < r_{K_1+1}\left({\pmb \alpha}^{\dag}_{K_1 + 1}(\beta)\right)$,
we have $r^{\star}_{\max}(K_1+1, \beta) \leq r_{K_1+1}\left({\pmb \alpha}^{\star}_{K_1 + 1}(\beta)\right)
< r_{K_1+1}\left({\pmb \alpha}^{\dag}_{K_1 + 1}(\beta)\right) = {r}^{\dag}_{\max}(K_1+1, \beta)$, which completes the proof of Lemma 5. \hfill$\blacksquare$
\end{itemize}

\ifCLASSOPTIONcaptionsoff
  \newpage
\fi

\bibliographystyle{IEEEtran}
\bibliography{IEEEabrv,Monkey}

\end{document}